\newcommand{\define}{\widehat{=}}
\newcommand{\xx}{\mathbf{x}}
\newcommand{\uu}{\mathbf{u}}
\newcommand{\oomit}[1]{}
\begin{document}


\title{A ``Hybrid" Approach for Synthesizing Optimal Controllers of Hybrid Systems: A Case Study of the Oil Pump Industrial Example}

\titlerunning{A ``Hybrid" Approach for Synthesizing Optimal Controllers}

\author 
{Hengjun Zhao\inst{1,2} \and Naijun Zhan\inst{2} \and Deepak Kapur\inst{3}\and Kim G. Larsen\inst{4}\,
}
\institute
{
  Graduate University of Chinese Academy of Sciences, Beijing, China
  \and
  State Key Lab. of Comput. Sci.,
  Institute of Software, CAS, Beijing, China\\
  \and
  Dept.~of Comput. Sci., University of New Mexico,
  Albuquerque, NM, USA\\
  \and CISS, CS, Aalborg University, Denmark
}
\authorrunning{H. Zhao, N. Zhan, D. Kapur and K.G. Larsen}

\maketitle

\begin{abstract}
In this paper, we propose an approach to reduce the optimal controller synthesis problem of hybrid systems to quantifier elimination; furthermore, we also show how to combine quantifier elimination with numerical computation in order to make it more scalable but at the same time, keep arising errors due to discretization manageable and within bounds.  A major advantage of our approach is not only that it avoids errors due to numerical computation, but it also gives a better optimal controller. In order to illustrate our approach, we use the real industrial example of an oil pump provided by the German company HYDAC within the European project \emph{Quasimodo} as a case study throughout this paper, and show that our method improves (up to $7.5\%$) the results reported in \cite{kim09} based on game theory and model checking.
\end{abstract}

\keywords{Hybrid System, Optimal Control, Quantifier Elimination, Numerical Computation}

\section{Introduction}
Hybrid systems such as physical devices controlled by computer software, are systems that exhibit both continuous and discrete behaviors. Controller synthesis for hybrid systems is an important area of research in both academia and industry. A synthesis problem focuses on designing a controller that ensures the given system will satisfy a safety requirement, a liveness requirement (e.g. reachability to a given set of states), or meet an optimality criterion, or a desired combination of these requirements.

Numerous work have been done on controller synthesis for safety and/or reachability requirements.
For example, in \cite{Pnueli00,Sastry00}, a
general framework for synthesizing controllers based on hybrid
automata to meet a given safety requirement was proposed, which relies on \emph{backward reachable set} computation and \emph{fixed point iteration};
while in \cite{Taly10}, a symbolic approach based on templates and constraint solving to the same problem was proposed, and in \cite{Taly-ctrl-10}, the symbolic approach is extended to meet both safety and reachability requirements.

However, the optimal controller synthesis problem is more involved, also quite important in the design of hybrid systems. In the literature, few work has been done on the problem. Larsen et al proposed an approach based on energy automata and model-checking \cite{kim09}, while Jha, Seshia and Tiwari gave a solution to the problem using unconstrained numerical optimization and machine learning \cite{Tiwari-emsoft11}.
However, in \cite{kim09}, allowing control only to be exercised at discrete points in time certainly limits the opportunity of synthesizing the optimal controller (though one can get arbitrarily close).  Moreover, discretizing  could cause an incorrect controller to be synthesized --- which therefore requires a posterior analysis (e.g. in [3], PHAV{\small{ER}} \cite{Phaver} is used for the purpose).
The approach of \cite{Tiwari-emsoft11} suffers from imprecision caused by numerical computation, and  cannot synthesize a really optimal controller sometimes because the machine learning technique cannot guarantee
its completeness.

In this paper, we propose a ``hybrid" approach for synthesizing optimal controllers of hybrid systems subject to safety requirements. The basic idea is as follows. Firstly, we reduce optimal controller synthesis subject to safety requirements to quantifier elimination (QE for short). Secondly, in order to make our approach scalable, we discuss how to combine QE with numerical computation, but at the same time, keep arising errors due to discretization manageable and within bounds. A major advantage of our approach is not only that it avoids errors due to numerical computation, but it also gives a better optimal controller.

Application of QE in controller synthesis of hybrid systems is not new.
The tool HyTech was the first symbolic model checker that can do parametric analysis \cite{Hytech} for linear hybrid automata, but for the oil pump example it will abort soon due to arithmetic overflow errors. Recently, verification and synthesis of switched dynamical systems using QE were discussed in \cite{sturm-tiwari-issac11}, where the authors gave principles and heuristics for combining different tools, to solve QE problems that are out of the scope of each component tool.

Our encoding of a {\small{MIN-MAX-MIN}} optimization problem into a QE problem is inspired by the idea in \cite{Dolzmann98}: minimizing an objective function $f(x_1,x_2,\ldots,x_n)$ can be solved by introducing an additional constraint $z\geq f(x_1,x_2,\ldots,x_n)$ and eliminating variables $x_1,x_2,\ldots,x_n$, where $z$ is a newly introduced variable. Similar ideas can also be found in \cite{Rupak10}.

The computation of optimal control strategies in this paper is typically a \emph{parametric optimization} problem, a topic researched extensively in both operation research and control communities. Symbolic methods have advantages in addressing parametric optimization problems \cite{Weispfenning94,Parrilo06,Kanno08}. However, we do not find any algorithm suitable for solving a parametric quadratic optimization problem over constraint with complex Boolean structure and hundreds of (or thousands of) atomic formulas as in this paper.

It was shown in \cite{Bemporad02} that for certain parametric quadratic optimization problems, the closed form solution exists: the optimizer is a piecewise affine function in the parameters, and the optimal value is a piecewise quadratic function in the parameters. Our experiment results confirm this.

In order to illustrate our approach, we use the oil pump industrial example provided by the German company {\small{HYDAC}} within the European project \emph{Quasimodo} as a case study throughout this paper, and show that our method results in a better optimal controller (up to $7.5\%$ improvement) than those reported in \cite{kim09} based on game theory and model checking. Moreover, we prove that the theoretically optimal controller of the oil pump example can be synthesized and its correctness is also guaranteed with our approach.

\subsubsection{Paper Organization:}
In Section 2 we propose a general framework for optimal controller synthesis of hybrid systems based on quantifier elimination and numerical computation. We focus on the oil pump case study in Section 3-6: a description of the oil pump control problem is given in Section 3, modeling of the system and safety requirements is shown in Section 4, a ``hybrid" approach for performing optimization is presented in Section 5, and further improvement through a modification in the model is discussed in Section 6. We finally conclude this paper by Section 7.

\section{The Overall Approach}\label{sec:overall}
In this section we propose an approach that reduces optimal controller synthesis of hybrid systems subject to safety requirements to QE. Such reduction is based on reachable set computation or approximation of hybrid systems and symbolic optimization.
We also discuss how numerical computation can be incorporated into our approach to make it more scalable.

Generally, a hybrid system consists of a set of continuous state variables $\xx$ (ranging over $\mathbb R^n$) and a set of discrete operating modes $Q$, with each of which a continuous dynamics is associated specifying the behavior of $\xx$ at each mode; discrete jumps between different modes may happen if some \emph{transition conditions} are satisfied by $\xx$.

The optimal controller synthesis problem studied in this paper can be stated as follows. Suppose we are given an under-specified hybrid system $\mathcal H$, in which the transition conditions are not determined but parameterized by $\uu$, a vector of control parameters. Our task is to determine values of $\uu$ such that $\mathcal H$ can make discrete jumps at desired points, thus guaranteeing that
\begin{itemize}
  \item[1)] a safety requirement $\mathcal S$ is satisfied, that is, $\xx$ stays in a designated safe region at any time point; and
  \item[2)] an optimization goal $\mathcal G$, possibly
$$\min_{\uu}g(\uu),\,\,\max_{\uu_2}\min_{\uu_1}g(\uu)\,,\, \mathrm{or}\,\,\min_{\uu_3}\max_{\uu_2}\min_{\uu_1}g(\uu)\,,\setcounter{footnote}{0}
\footnote{We assume that $\uu$ is chosen from a compact set, and elements of $\uu$ are divided into groups $\uu_1,\uu_2,\uu_3,\ldots$ according to their roles in $\mathcal G$.}
$$
where $g(\uu)$ is an objective function in parameters $\uu$,
is achieved.
\end{itemize}
Then our approach for solving the synthesis problem can be described as the following steps.
\subsubsection{Step 1.} {\emph{Derive constraint $D(\uu)$ on $\uu$ from safety requirements of the system.}}

If the {reachable set} $R$ (parameterized by $\uu$) of $\mathcal H$ can be exactly computed (e.g. for very simple linear hybrid automata), then we just require that $R$ should be contained in the safe region. Otherwise we have to approximate $R$ (with sufficient precision) by automatically generating inductive invariants of $\mathcal H$ (e.g. for general linear or nonlinear hybrid systems). The notion of \emph{inductive invariant} is crucial in safety verification of hybrid systems \cite{Tiwari08,PlatzerClarke09}, and constraint-based approaches have been  proposed for automatic generation of inductive invariants \cite{Manna04,Tiwari08,PlatzerClarke08,emsoft11}.

\subsubsection{Step 2.} {\emph{Encode the optimization problem $\mathcal G$ over constraint $D(\uu)$ into a quantified first-order formula $\mathbf Q\mathbf u.\varphi(\uu,z)$, where $z$ is a fresh variable.}}

Our encoding is based on the following proposition.
\begin{proposition}\label{prop:optimization}
Suppose $g_1(\uu_1)$, $g_2(\uu_1,\uu_2)$, $g_3(\uu_1,\uu_2,\uu_3)$ are polynomials, and $D_1(\uu_1)$, $D_2(\uu_1,\uu_2)$, $D_3(\uu_1,\uu_2,\uu_3)$ are nonempty compact semi-algebraic sets\footnote{A semi-algebraic set is defined by Boolean combinations of polynomial equations and inequalities.}.
Then there exist $c_1,\,c_2,\,c_3\in \mathbb R$ s.t.
\begin{eqnarray}
  & \exists \uu_1.(D_1\wedge g_1\leq z)\,\Longleftrightarrow\, z\geq c_1\,,\label{eqn:minz}\\
  & \forall \uu_2.\big(\exists \uu_1. D_2\longrightarrow\exists \uu_1.(D_2\wedge g_2\leq z)\big)\,\Longleftrightarrow\, z\geq c_2\,,\label{eqn:supminz}\\
  & \exists \uu_3.\big(
    (\exists \uu_1\uu_2. D_3) \,\wedge\,\forall \uu_2.\big(\exists \uu_1. D_3\longrightarrow\exists \uu_1.(D_3\wedge g_3\leq z)\big)\big)\,\Longleftrightarrow\, z\vartriangleright c_3 \,, \label{eqn:infsupminz}
\end{eqnarray}
where $\vartriangleright\in\{>,\geq\}$, and $c_1,c_2,c_3$ satisfy
\begin{eqnarray}
      c_1&\,=\,&\min_{\uu_1} g_1(\uu_1) \,\quad\mathrm{over}\,D_1(\uu_1)\,,\label{eqn:c1}\\
      c_2&\,=\,& \underset{\,\uu_2\,\,\,\,\,\,\uu_1}{\sup\min}\, g_2(\uu_1,\uu_2) \,
      \quad\mathrm{over}\,D_2(\uu_1,\uu_2)\,,\label{eqn:c2}\\
      c_3&\,=\,&\underset{\uu_3\,\,\,\,\,\,\uu_2\,\,\,\,\,\,\uu_1}{\inf\sup\min} \, g_3(\uu_1,\uu_2,\uu_3) \,\quad\mathrm{over}\,D_3(\uu_1,\uu_2,\uu_3)\,.
    \label{eqn:c3}
\end{eqnarray}
\end{proposition}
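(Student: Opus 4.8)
The plan is to establish the three equivalences in order, since \eqref{eqn:supminz} is obtained by ``slicing'' \eqref{eqn:minz} along $\uu_2$ and \eqref{eqn:infsupminz} is obtained by slicing \eqref{eqn:supminz} along $\uu_3$. Two elementary facts underlie everything: polynomials are continuous, and fixing some coordinates of (or projecting away some coordinates from) a compact set yields a compact set. The first fact lets me invoke the Weierstrass extreme value theorem at each innermost minimum; the second guarantees that the intermediate value functions are well defined and bounded. For \eqref{eqn:minz} itself, $D_1$ is nonempty and compact and $g_1$ is continuous, so the minimum $c_1$ in \eqref{eqn:c1} is attained at some $\uu_1^\ast$. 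The forward direction is then immediate ($c_1\le g_1(\uu_1)\le z$ for any witness $\uu_1$), and the converse uses $\uu_1^\ast$ as a witness whenever $z\ge c_1$.

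For \eqref{eqn:supminz} I would fix $\uu_2$ and restrict to the slice $\{\uu_1:D_2(\uu_1,\uu_2)\}$. The antecedent $\exists\uu_1.D_2$ asserts this slice is nonempty; on any nonempty slice \eqref{eqn:minz} shows that $\exists\uu_1.(D_2\wedge g_2\le z)$ is equivalent to $z\ge m(\uu_2)$, where $m(\uu_2)$ is the attained slicewise minimum of $g_2$. Thus the whole left-hand side asserts exactly that $z\ge m(\uu_2)$ for \emph{every} $\uu_2$ with nonempty slice, i.e.\ that $z$ is an upper bound for $\{m(\uu_2)\}$. By the least-upper-bound property of the supremum this is equivalent to $z\ge c_2$ with $c_2$ as in \eqref{eqn:c2}. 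Here lies the first subtlety: ``being an upper bound'' is equivalent to ``$\ge\sup$'' \emph{whether or not the supremum is attained}, which is precisely why the equivalence holds with $\ge$ and why one must write $\sup$ rather than $\max$. Finiteness of $c_2$ follows since $m(\uu_2)\le\max_{D_2}g_2$.

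For \eqref{eqn:infsupminz} I would fix $\uu_3$ and apply \eqref{eqn:supminz} to the slice $\{(\uu_1,\uu_2):D_3\}$ together with the function $g_3(\cdot,\cdot,\uu_3)$: for each $\uu_3$ with nonempty slice the inner $\forall\uu_2$-formula reduces to $z\ge s(\uu_3)$, where $s(\uu_3)=\sup_{\uu_2}\min_{\uu_1}g_3$. The left-hand side then becomes $\exists\uu_3.\,z\ge s(\uu_3)$, and it remains to compare this with $c_3=\inf_{\uu_3}s(\uu_3)$ from \eqref{eqn:c3}. This is where the main obstacle lies, and it explains why $\vartriangleright$ is left flexible: the infimum \emph{need not be attained}. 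If it is attained at some $\uu_3^\ast$, then $\exists\uu_3.\,z\ge s(\uu_3)$ is equivalent to $z\ge c_3$ and I take $\vartriangleright\,=\,\ge$; if it is not attained, then $s(\uu_3)>c_3$ for all $\uu_3$, so the existential forces $z>c_3$, while conversely any $z>c_3$ admits a witnessing $\uu_3$ by the definition of infimum, giving $\vartriangleright\,=\,>$. A two-case split on attainment of the infimum thus completes the proof, boundedness of $s$ (hence finiteness of $c_3$) again coming from continuity of $g_3$ on the compact set $D_3$.
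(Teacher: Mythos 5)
Your proof is correct and follows essentially the same route as the paper's: the extreme value theorem at the innermost minimum, slicing to lift (\ref{eqn:minz}) to (\ref{eqn:supminz}) and (\ref{eqn:supminz}) to (\ref{eqn:infsupminz}), bounding by $\max_{D_2}g_2$ and $\min_{D_3}g_3$ to guarantee that $c_2$ and $c_3$ exist and are finite, and a case split on attainment of the infimum to settle $\vartriangleright$. If anything, you are slightly more explicit than the paper, which writes out only the attained case of (\ref{eqn:infsupminz}) and declares the non-attained case ``analogous'', whereas you spell out the witnessing argument for $z>c_3$.
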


\begin{proof}
  Given assumptions in Proposition \ref{prop:optimization}, the following facts are easy to check:
  \begin{itemize}
    \item[(f1)] $\exists \uu_1. D_2(\uu_1,\uu_2)$ is a compact set over $\uu_2$;
    \item[(f2)] for any $\uu_2^{*}$ satisfying $\exists \uu_1. D_2(\uu_1,\uu_2)$, the instantiation of $D_2$ by $\uu_2^*$, i.e. $D_2(\uu_1,\uu_2^{*})$ is a compact set over $\uu_1$;
    \item[(f3)] results similar to (f1) and (f2) can be established for $D_3(\uu_1,\uu_2,\uu_3)$\,.
  \end{itemize}

  First we show the existence of $c_1,c_2,c_3$ in (\ref{eqn:c1}), (\ref{eqn:c2}) and (\ref{eqn:c3}).

  \emph{Proof of (\ref{eqn:c1}):} The existence of $c_1$ is based on the \emph{Extreme Value Theorem}: a real-valued continuous function has a \emph{minimum} and a \emph{maximum} on a compact set.

  \emph{Proof of (\ref{eqn:c2}):} Let $$\overline{c_2}=\max_{\uu_1,\uu_2}g_2(\uu_1,\uu_2)\quad\mathrm{over}\quad D_2(\uu_1,\uu_2)\,.$$
  Then for any $c_2^*$ satisfying $\exists \uu_1.D_1$
  $$\min_{\uu_1}g_2(\uu_1,\uu_2^*)\quad\mathrm{over}\quad D_2(\uu_1,\uu_2^*)$$ exists and
  $$\min_{\uu_1}g_2(\uu_1,\uu_2^*)\leq \overline{c_2}\,.$$
  Therefore the supremum of $\min_{\uu_1}g_2(\uu_1,\uu_2)$ over $D_2$, i.e. $c_2$, exists.

  \emph{Proof of (\ref{eqn:c3}):} Let $$\underline{c_3}=\min_{\uu_1,\uu_2,\uu_3}g_3(\uu_1,\uu_2,\uu_3)\quad\mathrm{over}\quad D_3\,.$$
  Then
  $$\sup_{\uu_2}\min_{\overset{}{\uu_1}}g_3(\uu_1,\uu_2,\uu_3)\quad\mathrm{over}\quad D_3$$ is lower bounded by $\underline{c_3}$. Thus $c_3$ exists.

  Next we will prove (\ref{eqn:minz}) -- (\ref{eqn:infsupminz}). For brevity, in the sequel we use ($\cdot$)$^l$ and ($\cdot$)$^r$ to denote the left and right hand side sub-formulas in the equivalence relations (\ref{eqn:minz}) -- (\ref{eqn:infsupminz}).

  \emph{Proof of (\ref{eqn:minz}):} ``$\Rightarrow$" Suppose $z$ satisfies (\ref{eqn:minz})$^l$ but $z<c_1$. Then there exists $\uu_1^*\in D_1$ s.t. $$c_1>z\geq g_1(\uu_1^*),$$
  which contradicts (\ref{eqn:c1}); ``$\Leftarrow$" Suppose $z$ satisfies (\ref{eqn:minz})$^r$. By (\ref{eqn:c1}) we have $c_1=g_1(\uu_1^*)$ for some $\uu_1^*\in D_1$. Thus
  $$z\geq c_1=g_1(\uu_1^*)\,,$$
  so $z$ satisfies (\ref{eqn:minz})$^l$.

  \emph{Proof of (\ref{eqn:supminz}):} ``$\Rightarrow$" Suppose $z$ satisfies (\ref{eqn:supminz})$^l$. Then for all $\uu_2^*$ in $\exists \uu_1. D_2$ we have
  $$\exists \uu_1.\big(D_2(\uu_1,\uu_2^*)\wedge g_2(\uu_1,\uu_2^*)\leq z\big)\,.$$
  By (\ref{eqn:minz}) it follows that
  $$z\geq \min_{\uu_1}g_2(\uu_1,\uu_2^*)\quad\mathrm{over}\quad D_2(\uu_1,\uu_2^*)$$
  for all $\uu_2^*$ in $\exists \uu_1. D_2$,  so by (\ref{eqn:c2}) $z\geq c_2\,.$
  ``$\Leftarrow$"  Suppose $z$ satisfies (\ref{eqn:supminz})$^r$. Then by (\ref{eqn:c2}) we have for all $\uu_2^*$ in $\exists \uu_1. D_2$
  $$z\geq \min_{\uu_1}g_2(\uu_1,\uu_2^*)\quad\mathrm{over}\quad D_2(\uu_1,\uu_2^*)\,.  $$ Again by (\ref{eqn:minz}) we get
  $$\exists \uu_1.\big(D_2(\uu_1,\uu_2^*)\wedge g_2(\uu_1,\uu_2^*)\leq z\big)$$
  holds for all $\uu_2^*$ in $\exists \uu_1. D_2$, which means $z$ satisfies
  (\ref{eqn:supminz})$^l$\,.

  \emph{Proof of (\ref{eqn:infsupminz}):}
  The proof below is based on the fact that if \emph{infimum} in (\ref{eqn:c3}) is actually \emph{minimum}, then (\ref{eqn:infsupminz})$^r$ is $z\geq c_3$; otherwise (\ref{eqn:infsupminz})$^r$ is $z>c_3$. We only give the proof for the former case.

  ``$\Rightarrow$" Suppose $z$ satisfies (\ref{eqn:infsupminz})$^l$. Then there exists $\uu_3^*$ in $\exists \uu_1\uu_2.D_3$ s.t.
  $$\forall \uu_2.\big(\exists \uu_1.D_3(\uu_1,\uu_2,\uu_3^*)\longrightarrow \exists \uu_1. (D_3(\uu_1,\uu_2,\uu_3^*)\wedge g_3(\uu_1,\uu_2,\uu_3^*)\leq z)\big)\,.$$
  By (\ref{eqn:supminz}) we have
  $$z\geq \sup_{\uu_2}\min_{\overset{}{\uu_1}}g_3(\uu_1,\uu_2,\uu_3^*)\quad \mathrm{over}\quad D(\uu_1,\uu_2,\uu_3^*)\,.$$
  Then by (\ref{eqn:c3}) $z$ satisfies (\ref{eqn:infsupminz})$^r$. ``$\Leftarrow$" Suppose $z$ satisfies (\ref{eqn:infsupminz})$^r$. Then by
  (\ref{eqn:c3}) we assert that there exists $\uu_3^*$ in $\exists\uu_1\uu_2.D_3$ s.t.
  $$z\geq \sup_{\uu_2}\min_{\overset{}{\uu_1}}g_3(\uu_1,\uu_2,\uu_3^*)\quad \mathrm{over}\quad D(\uu_1,\uu_2,\uu_3^*)\,.$$
  Again by (\ref{eqn:supminz}) it follows that $z$ satisfies
  $$\forall \uu_2.\big(\exists \uu_1.D_3(\uu_1,\uu_2,\uu_3^*)\longrightarrow \exists \uu_1. (D_3(\uu_1,\uu_2,\uu_3^*)\wedge g_3(\uu_1,\uu_2,\uu_3^*)\leq z)\big)\,.$$
  Thus $z$ satisfies (\ref{eqn:infsupminz})$^l$.

  If infimum in (\ref{eqn:c3}) is not minimum, an analogous proof can be given.
  \qed
\end{proof}


\subsubsection{Step 3.} {\emph{Eliminate quantifiers in $\mathbf Q\mathbf u.\varphi(\uu,z)$ and from the result we can retrieve the optimal value of $\mathcal G$ and the corresponding optimal controller $\uu$.}}

By Proposition \ref{prop:optimization}, the optimal value of a {\small{MIN}}, {\small{MAX-MIN}} or {\small{MIN-MAX-MIN}} problem can be obtained by applying QE to the left hand side (LHS) formulas in (\ref{eqn:minz})-(\ref{eqn:infsupminz}) respectively.
Although QE for the first-order theory of real closed fields is a complete decision procedure \cite{Tarski51}, due to the inherent doubly exponential complexity \cite{dh88}, we cannot expect to compute an optimal value, say $c_3$, by directly applying QE to a big formula with many alternations of quantifiers, like LHS of (\ref{eqn:infsupminz}).
Therefore it is necessary to devise our own mechanisms for performing QE more efficiently.

Note that in (\ref{eqn:infsupminz}), any instantiation of the outmost quantified variables $\uu_3$ would result in a simpler formula, whose quantifier-free equivalence gives an upper bound of $c_3$. If in some way we know the bounds of $\uu_3$, i.e. $l_i\leq \uu_3^i\leq u_i$, for $1\leq i\leq \mathrm{dim}(\uu_3)$, then by discretizing $\uu_3$ over all $[l_i,u_i]$ with certain granularity $\Delta$, and using the set of discretized values to instantiate the outmost existential quantifiers of (\ref{eqn:infsupminz}), we can get a finite set of simplified formulas, each of which produces an upper approximation of $c_3$. Finally, through an exhaustive search in this set we can select such an approximation that is closest to $c_3$. Finer granularity yields better approximation of the optimal value, so one can seek for a good balance between timing and optimality by tuning the granularity $\Delta$. Furthermore, the above computation is well suited for parallelization to make full use of available computing resources, because the intervals $[l_i,u_i]$ and corresponding instantiations can be divided into subgroups and allocated to different processes.

\section{Description of the Oil Pump Control Problem}\label{sec:oil}
The oil pump example \cite{kim09} was a real industrial case provided by the German company H{\small{YDAC}} E{\small{LECTRONICS}} G{\small{MBH}}, and studied at length within the European research project {\emph{Quasimodo}}.
The whole system, depicted by Fig.~\ref{fig:sys}, consists of a machine, an accumulator, a reservoir and a pump. The machine consumes oil periodically out of the accumulator with a duration of $20\,s$ (second) for one consumption cycle. The profile of consumption rate is shown in Fig.~\ref{fig:consump}. The pump adds oil from the reservoir into the accumulator with power $2.2\,l/s$ (liter/second).
\begin{figure}
\begin{minipage}[t]{.45\textwidth}
\begin{center}
\includegraphics[width=1.3in,height=1in]{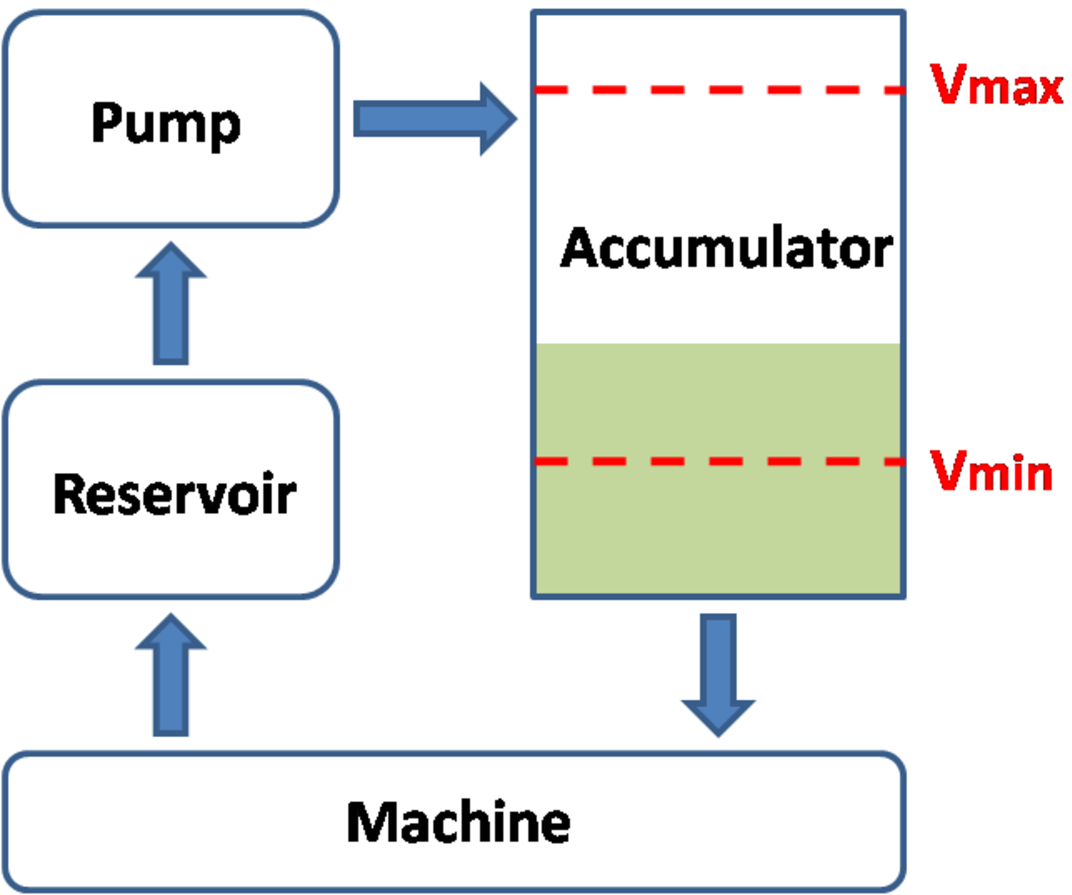}
\caption{The oil pump system. (This picture is based on \cite{kim09}.)}
\label{fig:sys}
\end{center}
\end{minipage}
\hspace{.1cm}
\begin{minipage}[t]{.55\textwidth}
\begin{center}
\includegraphics[width=2.4in,height=1in]{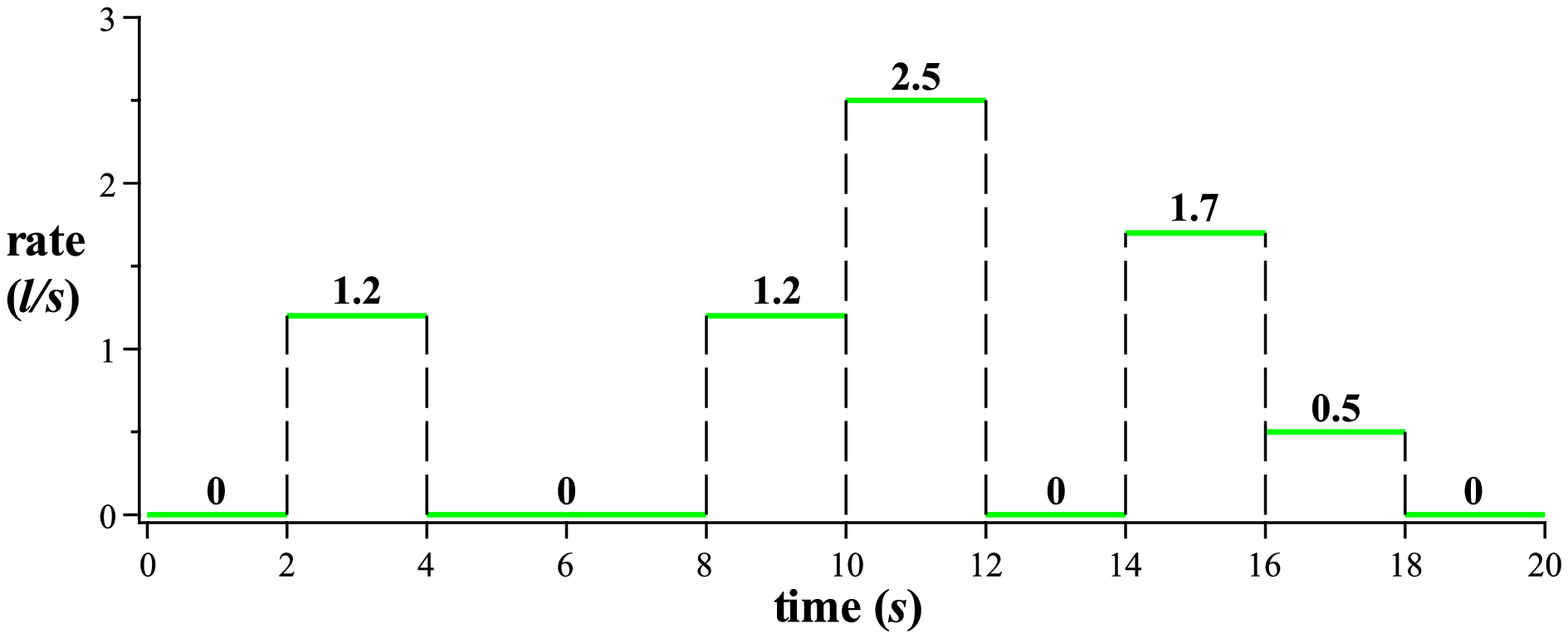}
\caption{Consumption rate of the machine in one cycle.}
\label{fig:consump}
\end{center}
\end{minipage}
\end{figure}

Control objectives for this system are: by switching on/off the pump at certain time points
\begin{equation}\label{eqn:infinite-points}
0\leq t_1\leq t_2\leq \cdots \leq t_n \leq t_{n+1}\leq \cdots\,,
\end{equation}
ensure that
\begin{itemize}
    \item[$\bullet$] $\mathrm{R_s}$\,(\emph{safety}):\, the system can run arbitrarily long while maintaining $v(t)$ within $ [V_{\min},V_{\max}]$ for any time point $t$, where $v(t)$ denotes the oil volume in the accumulator at time $t$, $V_{\min}=4.9\,l$  (liter) and $V_{\max}=25.1\,l$\,;
\end{itemize}
and considering the energy cost and wear of the system, a second objective:
\begin{itemize}
   \item[$\bullet$] $\mathrm{R_o}$\,(\emph{optimality}):\, minimize the average accumulated oil volume in the limit, i.e. minimize $$\lim_{T\rightarrow \infty}\frac{1}{T}\int_{t=0}^T v(t){\mathrm d}t\enspace.$$
\end{itemize}
Both objectives should be achieved under two additional constraints:
\begin{itemize}
   \item[$\bullet$] $\mathrm{R_{pl}}$\,(\emph{pump latency}):\, there must be a latency of at least $2\,s$ between any two consecutive operations of the pump; and
   \item[$\bullet$] $\mathrm{R_r}$\,(\emph{robustness}):\, uncertainty of the system should be taken into account:
       \begin{itemize}
         \item[-] fluctuation of consumption rate (if it is not $0$), up to $f =0.1\,l/s$\,;
         \item[-] imprecision in the measurement of oil volume, up to $\epsilon = 0.06\,l$\,;
         \item[-] imprecision in the measurement of time, up to $\delta =0.015\,s$.\footnote{In \cite{kim09}, $\delta$ is assumed to be $0.01$. Here we include an extra rounding error of $0.005$ due to floating point calculations in the implementation of our control strategy.}
       \end{itemize}
\end{itemize}

In \cite{kim09}, the authors used timed game automata to model the above system, and applied the tool U{\small{PPAAL}}-T{\small{I}}G{\small{A}} to synthesize near-optimal controllers. Due to discretization made in the timed-game model, an incorrect controller might be synthesized. Therefore the correctness and robustness of the synthesized controllers are checked using the tool PHAV{\small{ER}}. Through simulations in S{\small{IMULINK}}, it was shown that the controllers synthesized by U{\small{PPAAL}}-T{\small{I}}G{\small{A}} provides big improvement (about $40\%$) over the {\emph{Bang-Bang Controller}} and {\emph{Smart Controller}} that are currently used at the H{\small{YDAC}} company. We will show how further improvement can be achieved using our approach.

\section{Deriving Constraints from Safety Requirements}
Following \cite{kim09}, the determination of control points (\ref{eqn:infinite-points}) can be localized by exploiting the periodicity of oil consumption. That is, decisions on when to switch on/off the pump in one cycle can be made \emph{locally} by measuring the initial oil volume $v_0$ at the beginning of each cycle. Accordingly, the safety requirement $\mathrm R_{\mathrm{s}}$ in Section \ref{sec:oil} can be reformulated as: find an interval $[L,U]\subseteq [V_{\min},V_{\max}]$ s.t.
\begin{itemize}
    \item[$\bullet$] $\mathrm {R_{lu}}$\,(\emph{constraint for $L,U$}):\, for all $v_0\in[L,U]$, there is a finite sequence of time points
        $\mathbf t=(t_1,t_2,\ldots,t_n)\,,$\footnote{The choice of $n$ will be made later (in this paper $n$ can be $0,2,4,6$), but larger $n$'s obviously will have the potential of allowing improved controllers.}
        where $0\leq t_1\leq t_2\leq \ldots\leq t_n\leq 20$ satisfy $\mathrm {R_{pl}}$, for turning on/off the pump so that the resulting $v(t)$ with $v(0)=v_0$ satisfies
    \begin{itemize}
        \item[$\scriptstyle{\bullet}$] $\mathrm{R_i}$\,(\emph{inductiveness}):\, $v(20)\in [L,U]$; and
        \item[$\scriptstyle{\bullet}$] $\mathrm{R_{ls}}$\,(\emph{local safety}):\, $v(t)\in [V_{\min},V_{\max}]$ for all $t\in [0,20]$
    \end{itemize}
    under constraint $\mathrm {R_r}$.
\end{itemize}
\begin{definition}[Local Controller]\label{dfn:local-ctrl}
The above $\mathbf t$ corresponding to $v_0$ is called a {local controller}; the interval $[L,U]$ is called a stable interval.
\end{definition}

Basically, $\mathrm {R_{lu}}$ says that there is a stable interval $[L,U]$ and a corresponding family of local control strategies which can be repeated for arbitrarily many cycles and guarantee safety in each cycle.

\subsubsection{Modeling  Oil Consumption.}
Let $V_{out}(t)$ with $V_{out}(0)=0$ denote the amount of oil consumed by time $t$ in one cycle, and modify the consumption rate in Fig.~\ref{fig:consump} by $f$ in ($\mathrm{R_r}$). Then by simply integrating the lower and upper bounds of the consumption rate over the time interval $[0,20]$ we can get
\begin{equation*}\label{eqn:consumption}
 C_1\,\define\,\,\,
  \begin{array}{lll}
   \,\,\,\,\scriptstyle(0\leq t\leq 2 & \scriptstyle\longrightarrow & \scriptstyle V_{out}=0)\\
   \scriptstyle\wedge\, (2\leq t\leq 4 &\scriptstyle \longrightarrow & \scriptstyle1.1(t-2)\leq V_{out} \leq
                    1.3(t-2))\\
   \scriptstyle\wedge\, (4\leq t\leq 8 & \scriptstyle\longrightarrow & \scriptstyle 2.2\leq V_{out}\leq 2.6)\\
  \scriptstyle \wedge\, (8\leq t\leq 10 & \scriptstyle\longrightarrow & \scriptstyle 2.2+1.1(t-8)\leq V_{out}
                      \leq 2.6+1.3(t-8))\\
  \scriptstyle \wedge\, (10\leq t\leq 12 & \scriptstyle\longrightarrow & \scriptstyle 4.4+2.4(t-10)\leq V_{out}
                      \leq 5.2+2.6(t-10))\\
   \scriptstyle\wedge\,(12\leq t\leq 14 & \scriptstyle\longrightarrow &\scriptstyle 9.2\leq V_{out}\leq 10.4)\\
  \scriptstyle \wedge\, (14\leq t\leq 16 & \scriptstyle\longrightarrow & \scriptstyle 9.2+1.6(t-14)\leq V_{out}
                      \leq 10.4+1.8(t-14))\\
   \scriptstyle\wedge\,  (16\leq t\leq 18 & \scriptstyle\longrightarrow & \scriptstyle 12.4+0.4(t-16)\leq V_{out}
                      \leq 14+0.6(t-16))\\
   \scriptstyle\wedge\, (18\leq t\leq 20 & \scriptstyle\longrightarrow & \scriptstyle13.2\leq V_{out}\leq 15.2)
  \end{array}
  \enspace .\footnote{In the sequel when a function $\gamma(t)$ appears in a formula, the argument $t$ is dropped and $\gamma$ is taken as a real-valued variable.}
\end{equation*}
Actually, if the machine consuming oil is regarded as a hybrid system $\mathcal H$ with state variable $V_{out}$ and continuous dynamics subject to \emph{box} constraints, then $C_1$ is the exact \emph{reachable set} of $\mathcal H$ from initial point $V_{out}=0$ within 20 time units. Therefore we do not need to approximate the reachable set of $\mathcal H$ by generating inductive invariants. This is also the case with the following pump system. However, if the consumption profile is more complicated, say piecewise polynomial, then approximations are indeed necessary.

\subsubsection{Modeling  Pump.}
In \cite{kim09} it is assumed that the number of activations of pump in one cycle is at most 2. We will adopt this assumption at first and increase this number later on. With this assumption, there will be at most four time points to switch the pump on/off in one cycle, denoted by $0\leq t_1\leq t_2\leq t_3\leq t_4\leq 20$. If the pump is started only once or zero times, then we just set $t_3=t_4=20$ or $t_1=t_2=t_3=t_4=20$ respectively. Then the 2-second latency requirement ($\mathrm{R_{pl}}$) can be modeled by
\begin{equation*}\label{eqn:startimes}
 C_2\,\define\,\,\,
  \begin{array}{l}
   \,\,\,\,\scriptstyle(t_1\geq 2\,\wedge\, t_2-t_1\geq 2\, \wedge\, t_3-t_2\geq 2 \,\wedge\, t_4-t_3\geq 2 \,\wedge\, t_4\leq 20)\\
   \scriptstyle\vee\,  (t_1\geq 2\,\wedge\, t_2-t_1\geq 2 \,\wedge\, t_2\leq 20 \,\wedge\, t_3= 20 \,\wedge\, t_4=20)\\
   \scriptstyle\vee\,  (t_1=20\,\wedge \,t_2=20\,\wedge\, t_3=20\,\wedge\, t_4=20)\\
  \end{array}\enspace .
\end{equation*}

Let $V_{in}(t)$ with $V_{in}(0)=0$ denote the amount of oil introduced into the accumulator by time $t$ in one cycle. Then we have
\begin{equation*}\label{eqn:introduced}
 C_3\,\define\,\,\,
  \begin{array}{lll}
\,\,\,\, \scriptstyle (0\leq t\leq t_1 & \scriptstyle\longrightarrow &\scriptstyle V_{in}=0)\\
   \scriptstyle\wedge\, (t_1\leq t\leq t_2 & \scriptstyle\longrightarrow &\scriptstyle V_{in}=2.2(t-t_1))\\
  \scriptstyle \wedge\,  (t_2\leq t\leq t_3 & \scriptstyle\longrightarrow & \scriptstyle V_{in}=2.2(t_2-t_1))\\
   \scriptstyle\wedge\,  (t_3\leq t\leq t_4 & \scriptstyle\longrightarrow &
                      \scriptstyle  V_{in}=2.2(t_2-t_1)+2.2(t-t_3))\\
  \scriptstyle \wedge\, (t_4\leq t\leq 20 & \scriptstyle\longrightarrow &\scriptstyle  V_{in}=2.2(t_2+t_4-t_1-t_3))\\
  \end{array}
 \enspace .
\end{equation*}

\subsubsection{Encoding  Safety Requirements.}
Denote the oil volume in the accumulator at the beginning of one cycle by $v_0$, and the volume at time $t$ by $v(t)$. Then for any $0\leq t\leq 20$ we have:
\begin{equation*}\label{eqn:acclevel}
 C_4\,\define\,\, v=v_0+V_{in}-V_{out} \enspace.
\end{equation*}

According to ($\mathrm{R_r}$), the measurement of $t_i$ ($1\leq i\leq 4$) and $v_0$ may deviate from their actual values, so $v(t)$ will deviate from its predicted value as stated in the requirement $C_4$. Nevertheless, we have the following estimation of the deviation of $v(t)$.
\begin{lemma}\label{lem:deviation}
Let $\tilde v(t)$ denote the actual oil volume in the accumulator at time $t$. Then for any $0\leq t\leq 20$,
  $|v(t)-\tilde v(t)|\leq 8.8\,\delta + \epsilon<0.2.$
\end{lemma}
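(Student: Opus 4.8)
The plan is to split the total deviation into two independent contributions --- one coming from the imprecise measurement of the initial volume $v_0$, the other from the imprecise timing of the pump switches --- bound each separately, and then add them. Writing the actual volume as $\tilde v(t)=\tilde v_0+\tilde V_{in}(t)-V_{out}(t)$ and the predicted volume as $v(t)=v_0+V_{in}(t)-V_{out}(t)$, where the consumption term $V_{out}$ already carries the rate fluctuation $f$ and so is common to both and cancels, we obtain $\tilde v(t)-v(t)=(\tilde v_0-v_0)+(\tilde V_{in}(t)-V_{in}(t))$. The first summand is bounded in absolute value by $\epsilon$ directly from the robustness assumption $\mathrm R_{\mathrm r}$.

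The core of the argument is bounding $|\tilde V_{in}(t)-V_{in}(t)|$ uniformly in $t\in[0,20]$. I would first observe that both $V_{in}$ and $\tilde V_{in}$ are continuous, nondecreasing, piecewise-linear functions whose slope is either $0$ (pump off) or $2.2$ (pump on), with breakpoints at the intended times $t_1\le t_2\le t_3\le t_4$ and at the realized times $\tilde t_i$ respectively, where $|\tilde t_i-t_i|\le\delta$ by $\mathrm R_{\mathrm r}$. Setting $D(t)=V_{in}(t)-\tilde V_{in}(t)$, we have $D(0)=0$, and $D$ is constant on every interval where the two pumping schedules agree; it can change only on the transition windows of width at most $\delta$ around each breakpoint, where exactly one of the two schedules is pumping at rate $2.2$. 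Hence each of the four switch points alters $D$ by at most $2.2\,\delta$ in magnitude, and since $D$ is piecewise constant between these windows, every value $D(t)$ is a partial sum of at most four such increments; therefore $|D(t)|\le 4\cdot 2.2\,\delta=8.8\,\delta$ for all $t$. The degenerate cases in $C_2$, where the pump is used once ($t_3=t_4=20$) or not at all ($t_1=\cdots=t_4=20$), merely remove breakpoints and thus only strengthen the bound.

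Combining the two estimates gives $|\tilde v(t)-v(t)|\le|\tilde v_0-v_0|+|D(t)|\le \epsilon+8.8\,\delta$ for every $t\in[0,20]$, and substituting $\epsilon=0.06$ and $\delta=0.015$ yields $8.8\cdot 0.015+0.06=0.192<0.2$, as claimed.

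I expect the main obstacle to be the middle step: making rigorous that the four timing errors contribute to $D(t)$ additively yet never more than once each, i.e.\ that the running deviation is genuinely a partial sum of four increments each bounded by $2.2\,\delta$, and that this worst case dominates at \emph{every} $t$ rather than only at $t=20$. Care is also needed to justify that the consumption term $V_{out}$ truly cancels --- that the deviation modeled here is attributable solely to the $v_0$ measurement and the pump timing, the consumption-rate uncertainty $f$ being accounted for separately inside the reachable-set constraint $C_1$ and not in this lemma.
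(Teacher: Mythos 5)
Your proposal is correct and follows essentially the same route as the paper's (one-line) proof: the paper likewise attributes an imprecision of $\epsilon$ to the measurement of $v_0$ and $2.2\,\delta$ to each of the four switching times $t_i$, with the consumption uncertainty $f$ handled separately inside $C_1$, giving $4\cdot 2.2\,\delta+\epsilon=0.192<0.2$. Your write-up merely makes rigorous the additivity of the four timing contributions (disjoint transition windows, $D(t)$ a partial sum of increments each bounded by $2.2\,\delta$), which the paper leaves implicit.
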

\begin{proof}
By ($\mathrm{R_r}$) and $C_4$, $v_0$ will cause an imprecision of $\epsilon$ and each $t_i$ will cause an imprecision of $2.2\,\delta$\,.
\qed
\end{proof}

By Lemma~\ref{lem:deviation}, it is sufficient to rectify the safety bounds in ($\mathrm{R_i}$) and ($\mathrm{R_{ls}}$) by an amount of $0.2$. Let
\begin{equation*}\label{eqn:sfendpoint}
\begin{array}{lll}
  C_5&\,\define\,&\,\,t=20\longrightarrow L+0.2\leq v\leq U-0.2\,\\
  C_6&\,\define\,&\,\, 0\leq t\leq 20 \longrightarrow V_{\min}+0.2\leq v\leq V_{\max}-0.2 \enspace .
\end{array}
\end{equation*}
Then ($\mathrm{R_i}$) and ($\mathrm{R_{ls}}$) can be expressed as
\begin{equation*}
  \mathcal S\,\define\, \forall t,v,V_{in},V_{out}.(C_1\wedge C_3 \wedge C_4 \longrightarrow C_5\wedge C_6 )\,.
\end{equation*}
\subsubsection{Deriving Constraints.}
To find such $[L,U]$ that for every $v_0\in[L,U]$ there is a local control strategy satisfying $\mathrm {R_{i}}$ and $\mathrm {R_{ls}}$, let
\begin{equation*}\label{eqn:constraintv0}
C_7\,\define\,\, L\leq  v_0\leq U \enspace,
\end{equation*}
and then $\mathrm{R_{lu}}$ can be encoded into
\begin{equation*}\label{eqn:constraintLU}
\begin{array}{l}
C_8\,\define\,\forall v_0. \Big(C_7 \longrightarrow \exists t_1t_2t_3t_4. \big(C_2\,\wedge\mathcal S\big)\Big)
\end{array}\,.
\end{equation*}
We use the tool Mjollnir \cite{Monniaux-Mjollnir} to do QE on $C_8$ and the following result is returned:
\begin{equation*}\label{eqn:constraintLU-qf}
  C_9\,\define\, L\geq 5.1 \wedge U\leq 24.9 \wedge U-L\geq 2.4\enspace.
\end{equation*}
Then the relation between $L,U,v_0$ and the corresponding local control strategy $\mathbf t=(t_1,t_2,t_3,t_4)$ can be obtained by applying QE to
\begin{equation*}\label{eqn:constraintai}
 C_{10}\,\define\,C_2\wedge C_7\wedge C_9 \,\wedge \mathcal S\,.
\end{equation*}
The result given by Mjollnir, when converted to DNF, is a disjunction of 92 components:
\begin{equation*}\label{eqn:domain7}
\mathcal D(L,U,v_0,t_1,t_2,t_3,t_4)\,\define\,\bigvee_{i=1}^{92} D_i
\end{equation*}
(denoted by $\mathcal D$ for short), with each $D_i$ representing a nonempty closed convex polyhedron (see Appendix \ref{app:dnf92}).\footnote{The fact that each $D_i$ is a nonempty closed set can be checked using QE.}

\section{A ``Hybrid" Approach for Optimization}
\subsection{Encoding of the Optimization Objective}
By Definition \ref{dfn:local-ctrl}, the optimal average accumulated oil volume in $\mathrm {R_{o}}$ can be redefined as
\begin{equation}\label{eqn:opt-criterion}
\bullet\,\,\,\mathrm {R_o':}\quad\quad
\min_{\scriptscriptstyle{[L,U]}}\,\max_{\scriptscriptstyle{v_0\in [L,U]}}\,\min_{\mathbf t} \,\frac{1}{20}\int_{t=0}^{20} v(t)\mathrm{d}t \enspace. \end{equation}
The intuitive meaning of ($\mathrm{R_{o}'}$) is:
\begin{itemize}
\item for each admissible $[L,U]$ and each $v_0\in [L,U]$, minimize the average accumulated oil volume in one cycle, i.e. $\frac{1}{20}\int_{t=0}^{20} v(t)\mathrm{d}t$, over all admissible local controllers $\mathbf t$;
\item fix $[L,U]$ and select the \emph{worst} local minimum by traversing all $v_0\in [L,U]$;
\item then the global minimum is obtained at the interval whose worst local minimum is \emph{minimal}.
\end{itemize}
\begin{definition}[Local Optimal Controller]\label{dfn:local-opt-ctrl}
Let $\mathcal D_{\mathbf t}\,\define\,\{\mathbf t\mid (L,U,v_0,\mathbf t)\in \mathcal D\}$ for fixed $L,U,v_0$. Then we call
$$\min_{\mathbf t\in \mathcal D_{\mathbf t}}\,\frac{1}{20}\int_{t=0}^{20} v(t)\mathrm{d}t$$
the optimal local average accumulated oil volume corresponding to $L,U,v_0$, and the optimizer $\mathbf t$ is called the local optimal controller.
\end{definition}

Let $g(v_0,t_1,t_2,t_3,t_4)\,\define\,\frac{1}{20}\int_{t=0}^{20} v(t)\mathrm{d}t$, denoted by $g$ for short. Then it can be computed from $C_1,C_3,C_4$ without considering fluctuations of consumption rate that
$$ g \,=\, \frac{20v_0+1.1(t_1^2-t_2^2+t_3^2-t_4^2-40t_1+40t_2-40t_3+40t_4)-132.2}{20}\enspace .$$
Then by Proposition \ref{prop:optimization}, ($\mathrm{R_o'}$) can be encoded into
\begin{equation}\label{eqn:objective}
\exists L,U.\Big(C_9\wedge\,\forall v_0.\big(C_7 \longrightarrow \exists t_1t_2t_3t_4.(\mathcal D \wedge g\leq z)\big)\Big)\,,
\end{equation}
which is equivalent to $z\geq z^*$ or $z>z^*$, where $z^*$ equals the value of (\ref{eqn:opt-criterion}).

\subsection{Techniques for Performing QE}\label{sec:techniques}
The above deduced (\ref{eqn:objective}) is a huge formula with nonlinear terms and two alternations of quantifiers, for which direct QE fails. Therefore we have made our efforts to decompose the QE problem into manageable parts.

\subsubsection{Eliminating the Inner Quantifiers.}
We first eliminate the innermost quantified variables $\exists t_1t_2t_3t_4$ by employing the theory of quadratic programming.

Note that $D_i$ in $\mathcal D$ is a closed convex polyhedron for all $i$ and $g$ is a quadratic polynomial function, so minimization of $g$ on $D_i$ is a \emph{quadratic programming} problem. Then the \emph{Karush-Kuhn-Tucker} ({\small{KKT}}) \cite{jensen02} condition
\begin{equation}\label{eqn:quadratic-opt}
\theta_{\mathrm{kkt}}\,\,\define\,\,\,\exists \boldsymbol{\mu}. \,\mathcal L(g,D_i)\,,
\end{equation}
where $\mathcal L(g,D_i)$ is a linear formula constructed from $g$ and $D_i$, and $\boldsymbol{\mu}$ is a vector of new variables, gives a \emph{necessary} condition for a local minimum of $g$ on $D_i$.

By applying the {\small{KKT}} condition to each $D_i$ and eliminating all $\boldsymbol \mu$, we can get a \emph{necessary} condition $\mathcal D'$, a disjunction of 580 parts, for the minimum of $g$ on $\mathcal D$:
\begin{equation*}
  \mathcal D'=\bigvee_{j=1}^{580} B_j\,.
\end{equation*}
Furthermore,
each $B_j$ has the nice property that for any $L,U,v_0$, a \emph{unique} $\mathbf t_j$ is determined by $B_j$ (see Appendix \ref{app:dnf580}).\footnote{This has been verified by QE.}
For instance, one of the $B_j$ reads:
\begin{equation}\label{eqn:afterKKT}
  \begin{array}{l}
  t_4=20 \wedge 16t_2+10L-349=0\, \wedge
  \\ t_2-t_3+2=0 \wedge 22t_1-16t_2-10v_0+107=0\, \wedge \cdots\\
  \end{array}
  \enspace.
\end{equation}

Since $\mathcal D'$ keeps the minimal value point of $g$ on $\mathcal D$, the formula obtained by replacing $\mathcal D$ by $\mathcal D'$ in (\ref{eqn:objective})
\begin{equation}\label{eqn:objectiveD'}
\exists L,U.\Big(C_9\wedge\,\forall v_0.\big(C_7 \longrightarrow \exists t_1t_2t_3t_4.(\mathcal D' \wedge g\leq z)\big)\Big)\,
\end{equation}
is equivalent to (\ref{eqn:objective}). Then according to formulas like (\ref{eqn:afterKKT}), $\exists t_1t_2t_3t_4$ in (\ref{eqn:objectiveD'})
can be eliminated by the distribution of $\exists$ among disjunctions, followed by instantiations of $\mathbf t_j$ in each disjunct. Thus (\ref{eqn:objective}) can be converted to
\begin{equation}\label{eqn:objective2}
\exists L,U.\Big(C_9\wedge \forall v_0.\big(C_7 \longrightarrow \bigvee_{j=1}^{580} (A_j\wedge g_j\leq z)\big)\Big)\enspace,
\end{equation}
where $A_j$ is a constraint on $L,U,v_0$, and $g_j$ is the instantiation of $g$ using $\mathbf t_j$ given by formulas like (\ref{eqn:afterKKT}).

\subsubsection{Eliminating the Outer Quantifiers.} We eliminate the outermost quantifiers $\exists L,U$ in (\ref{eqn:objective2}) by discretization, as discussed in Section \ref{sec:overall}.

According to $C_9$, the interval $[5.1,24.9]$ is discretized with a granularity of $0.1$, which gives a set of $199$ elements. Then assignments to $L,U$ from this set satisfying $C_9$ are used to instantiate (\ref{eqn:objective2}). There are totally $15400$ such pairs of $L,U$, e.g. $(5.1,7.5)$, $(5.1,7.6)$ etc, and as many instantiations in the form of
\begin{equation}\label{eqn:objective3}
\forall v_0.\big(C_7 \longrightarrow \bigvee_{j=1}^{580} (A_j\wedge g_j\leq z)\big)\enspace,
\end{equation}
each of which gives an optimal value corresponding to $[L,U]$. In practice, we start from $L=5.1,U=7.5$, and search for the minimal optimal value through all the $15400$ cases with $L$ or $U$ incremented by $0.1$ every iteration.

\subsubsection{Eliminating the Middle Quantifier.}
We finally eliminate the only quantifier left in (\ref{eqn:objective3}) by a divide-and-conquer strategy. First, we can show that
\begin{lemma}\label{lem:equivalence}
  $\bigvee_{j=1}^{580}A_j$ is equivalent to $C_7$ in (\ref{eqn:objective3}).
\end{lemma}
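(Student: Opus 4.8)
The plan is to prove the two implications separately, exploiting the way $\mathcal D'$ was built from $\mathcal D$ through the {\small{KKT}} condition. Recall that each $A_j$ is precisely $\exists t_1t_2t_3t_4.\,B_j$ — the constraint on $L,U,v_0$ obtained by instantiating the unique $\mathbf t_j$ that $B_j$ determines — and that the whole formula (\ref{eqn:objective3}) is read under the standing assumption that the fixed $L,U$ satisfy $C_9$. I would note at the outset that the forward implication $\bigvee_j A_j\Rightarrow C_7$ in fact holds unconditionally, while the converse is where $C_9$ is needed.

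For the forward direction I would first argue $\mathcal D'\Rightarrow\mathcal D$. The {\small{KKT}} system $\mathcal L(g,D_i)$ for each $D_i$ contains $D_i$ itself as its primal-feasibility part, and the eliminated multipliers $\boldsymbol\mu$ occur only in the stationarity, dual-feasibility and complementary-slackness conjuncts; since $D_i$ is free of $\boldsymbol\mu$, eliminating $\boldsymbol\mu$ in (\ref{eqn:quadratic-opt}) preserves the conjunct $D_i$, so each $B_j$ implies its parent $D_i$ and hence $\mathcal D$. Because $\mathcal D$ is the quantifier-free equivalent of $C_{10}=C_2\wedge C_7\wedge C_9\wedge\mathcal S$, it carries $C_7$ as a conjunct, so $\mathcal D\Rightarrow C_7$. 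As $C_7$ does not mention $\mathbf t$, existentially quantifying $B_j\Rightarrow C_7$ over $t_1t_2t_3t_4$ gives $A_j\Rightarrow C_7$, and the disjunction follows.

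For the converse $C_7\Rightarrow\bigvee_j A_j$ (under $C_9$) I would fix values $L,U$ satisfying $C_9$ and take any $v_0$ with $L\leq v_0\leq U$. Since $C_9$ was obtained by eliminating quantifiers from $C_8$ (the encoding of $\mathrm R_{lu}$), the truth of $C_9$ certifies $C_8$, so there is a local controller $\mathbf t$ satisfying $C_2\wedge\mathcal S$; together with the assumed $C_7$ and $C_9$ this point satisfies $C_{10}$, hence $\mathcal D$. Thus the slice $\mathcal D_{\mathbf t}$ is nonempty, and the witnessing point lies in some $D_i$, whose $\mathbf t$-slice is a nonempty closed polyhedron, bounded by the constraints $0\leq t_1\leq t_2\leq t_3\leq t_4\leq 20$ of $C_2$, hence compact. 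By the Extreme Value Theorem invoked in Proposition \ref{prop:optimization}, $g$ attains a minimum on this slice; that minimizer is a {\small{KKT}} point of the quadratic program and therefore satisfies some $B_j$ arising from $D_i$. Consequently $A_j=\exists t_1t_2t_3t_4.\,B_j$ holds at $(L,U,v_0)$, which yields the disjunction.

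The main obstacle is the converse direction, and specifically certifying that the passage from $\mathcal D$ to the necessary condition $\mathcal D'$ never destroys feasibility: for every admissible $v_0$ one must actually exhibit a minimizer of $g$ on the relevant feasible slice and argue it is captured by some $B_j$. This forces three ingredients to be combined — global feasibility for all $v_0\in[L,U]$ (inherited from the equivalence of $C_9$ and $C_8$), nonemptiness and compactness of the $D_i$-slices (closedness being the QE-verified property recorded for each $D_i$, boundedness coming from $C_2$), and the genuine necessity of the {\small{KKT}} conditions at a minimum of a quadratic objective over a polyhedron. Finally, the uniqueness of the $\mathbf t_j$ determined by each $B_j$ (also QE-verified) is what makes the instantiation well defined and justifies treating $A_j$ as the honest projection $\exists t_1t_2t_3t_4.\,B_j$, so I would cite it explicitly when reducing $\bigvee_j A_j$ to its role in (\ref{eqn:objective3}).
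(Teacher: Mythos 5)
Your proof is correct, but it is organized differently from the paper's. The paper derives the lemma as a corollary of the already-established equivalence between (\ref{eqn:objective}) and (\ref{eqn:objectiveD'}): it prefixes $\exists z$ to both sides of $\exists \mathbf t.(\mathcal D\wedge g\leq z)\Leftrightarrow \bigvee_j(A_j\wedge g_j\leq z)$, eliminates $z$ (which simply strips the cost conjuncts) to obtain $\exists\mathbf t.\mathcal D\Leftrightarrow\bigvee_j A_j$, and then invokes the design of $\mathcal D$ via $C_8$ and $C_{10}$ to get $\exists\mathbf t.\mathcal D\Leftrightarrow C_7$ once $L,U$ satisfying $C_9$ are instantiated. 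You instead prove the two implications directly, without the $z$-elimination trick, by unpacking what the paper leaves implicit in ``as discussed above'': for $\bigvee_j A_j\Rightarrow C_7$ you observe that the {\small{KKT}} system (\ref{eqn:quadratic-opt}) contains the $\boldsymbol\mu$-free primal-feasibility conjunct $D_i$, so each $B_j$ implies $\mathcal D$ and hence $C_7$ (strictly, $\mathcal D$ is only \emph{equivalent} to a formula with conjunct $C_7$, not syntactically carrying it, but the semantic implication is all you use); for the converse you combine $C_9\Leftrightarrow C_8$ (feasibility of some $\mathbf t$ for every $v_0\in[L,U]$), compactness of the $D_i$-slices with the Extreme Value Theorem, and necessity of {\small{KKT}} at a minimizer of a quadratic over a polyhedron to land in some $B_j$. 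These are exactly the facts the paper relies on when asserting that $\mathcal D'$ ``keeps the minimal value point of $g$ on $\mathcal D$'' and that $\mathcal D'\Rightarrow\mathcal D$, so the mathematical content coincides; what your version buys is a self-contained argument that makes explicit where closedness (QE-verified), boundedness (from $C_2$), and {\small{KKT}} necessity enter, and it correctly identifies that the forward implication holds unconditionally while only the converse needs $C_9$ --- a distinction the paper's terser proof does not surface. What the paper's version buys is brevity: having already justified the passage from (\ref{eqn:objective}) to (\ref{eqn:objective2}), the lemma follows by a purely formal manipulation with no further analytic input.
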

\begin{proof}
As discussed above,
$$\exists \mathbf t. (\mathcal D\wedge g\leq z)\Leftrightarrow \exists \mathbf t. (\mathcal D'\wedge g\leq z)\Leftrightarrow \bigvee_{j=1}^{580} (A_j\wedge g_j\leq z)\enspace.$$
Therefore
$$\exists z\exists\mathbf t. (\mathcal D\wedge g\leq z) \Leftrightarrow \exists z. \bigvee_{j=1}^{580} (A_j\wedge g_j\leq z)\enspace.$$
By eliminating $z$ we have $\exists \mathbf t.\mathcal D\Leftrightarrow \bigvee_{j=1}^{580}A_j$. According to $C_8$ and $C_{10}$, $\mathcal D$ has been chosen in such a way that for any $v_0\in[L,U]$ there is a local controller $\mathbf t$. Thus $\exists \mathbf t.\mathcal D\Leftrightarrow C_7$ when $L,U$ are instantiated.\qed
\end{proof}
By this lemma if all $A_j$ are pairwise disjoint then (\ref{eqn:objective3}) is equivalent to
\begin{equation}\label{eqn:objective4}
\bigwedge_{j=1}^{580}\forall v_0.\big( v_0\in A_j \longrightarrow (A_j\wedge g_j\leq z)\big)\,.
\end{equation}
Since each conjunct in (\ref{eqn:objective4}) is a small formula with only two variables $v_0,z$ and one universal quantifier,  it can be dealt with quite efficiently.
\begin{figure}
\begin{center}
\includegraphics[width=1.5in,height=1.3in]{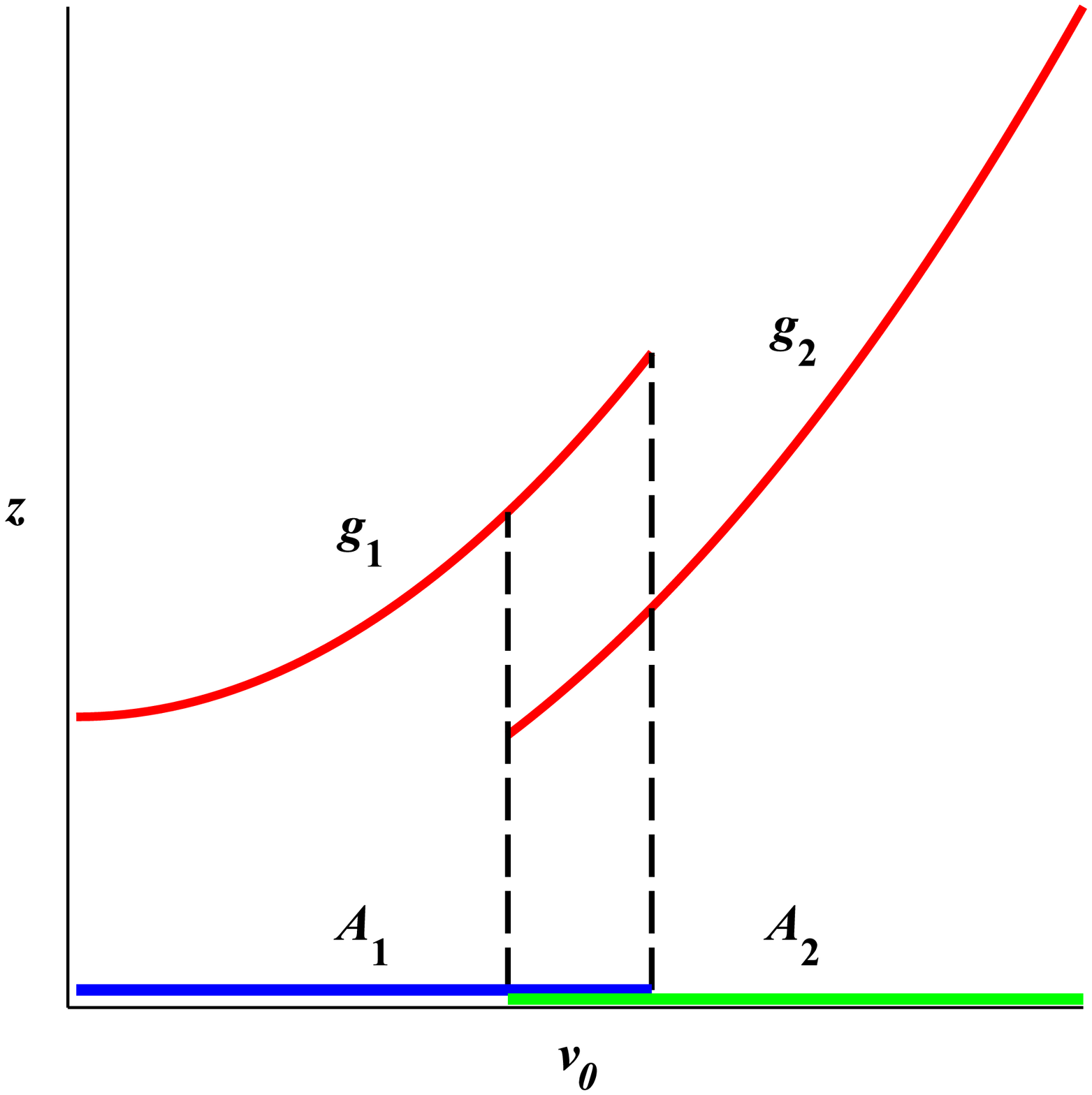}
\hspace{1.5cm}
\includegraphics[width=1.5in,height=1.3in]{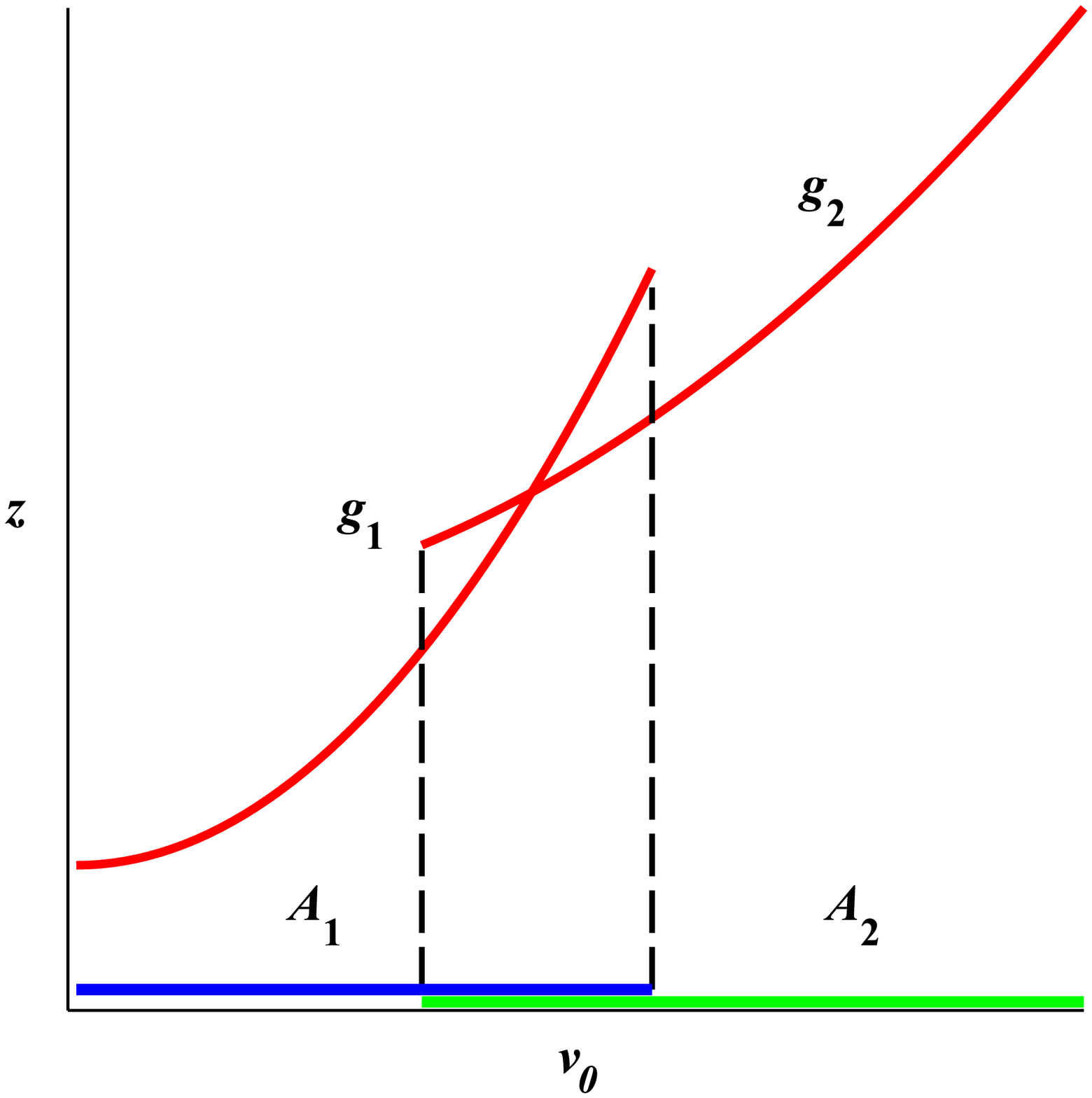}
\caption{Region partition.}
\label{fig:parti}
\end{center}
\end{figure}

If the set of $A_j$s are not pairwise disjoint, then we have to partition them into disjoint regions and assign a new cost function $g_k'$ to each region. The idea for performing such partition is simple and is illustrated by Fig.~\ref{fig:parti}.

Suppose two sets, say $A_1,A_2$, are chosen arbitrarily from the set of $A_j$s. If $A_1\cap A_2=\emptyset$, then we do nothing. Otherwise check wether $g_1\leq g_2$ (or $g_2\leq g_1$) on $A_1\cap A_2$: if so, assign the smaller one, i.e. $g_1\leq z$ (or $g_2\leq z$) to $A_1\cap A_2$; otherwise we simply assign $(g_1\leq z) \vee (g_2\leq z)$ to $A_1\cap A_2$.

If at the same time of partitioning regions we also make a record of the local control strategy in each region, i.e. $\mathbf t_j$, then in the end we can get exactly the family of local optimal controllers corresponding to each $v_0$.

\subsection{Results of QE}
Various tools are available for doing QE. In our implementation, the SMT-based tool Mjollnir \cite{Monniaux08,Monniaux-Mjollnir} is chosen for QE on linear formulas, while R{\small{EDLOG}} \cite{Redlog} implementing \emph{virtual substitution} \cite{Weispfenning93} is chosen for formulas with nonlinear terms. The computer algebra system R{\small{EDUCE}} \cite{Reduce}, of which R{\small{EDLOG}} is an integral part, allows us to perform some programming tasks, e.g. region partition. Table \ref{tbl:timing} shows the performance of our approach. All experiments are done on a desktop running Linux with a 2.66\,GHz CPU and 3\,GB memory.
\begin{table}
\begin{center}
  \caption{Timing of different QE tasks.}\label{tbl:timing}
\begin{tabular}{|c|c|c|c|c|}
  \hline
  formula & $C_8$ & $C_{10}$ & \,$\theta_{\mathrm{kkt}}$ (all 92)\, & all the rest \\
  \hline
  tool & \,Mjollnir\, & \,Mjollnir\, & \,Mjollnir\, &\, Redlog/Reduce \,\\
  \hline
  time & 8m8s & 4m13s & 31s & $<$1s\\
  \hline
\end{tabular}
\end{center}
\end{table}
\paragraph{Remark} In Table \ref{tbl:timing}, timing is in minutes (m) and seconds (s); in the last column, the time taken to get the first optimal value\footnote{For the model with 2 activations, this optimal value is only obtained at the 1st iteration.} is less than 1 second, whereas all 15400 iterations will cost more than 10 hours (using a single computing process).

The final results are as follows:
\begin{itemize}
  \item The interval that produces the optimal value is $[5.1,7.5]$.
  \item The local optimal controller for $v_0\in[5.1,7.5]$ is
   \begin{equation}\label{eqn:oil} t_1=\frac{10v_0-25}{13}\,\wedge\, t_2=\frac{10v_0+1}{13}\, \wedge\, t_3=\frac{10v_0+153}{22}\, \wedge\, t_4=\frac{157}{11}\enspace,
    \end{equation} which is
    illustrated by Picture I in Fig.~\ref{fig:res}. If $v_0=6.5$, then by (\ref{eqn:oil}) the pump should be switched on at $t_1=40/13$, off at $t_2=66/13$, then on at $t_3=109/11$, and finally off at $t_4=157/11$.
  \item The optimal average accumulated oil volume $\frac{215273}{28600}=7.53$ is obtained, improving by $5\%$ the optimal value $7.95$ in \cite{kim09}, which is already a $40\%$ improvement of the controllers from the H{\small{YDAC}} company.
      The local optimal average accumulated oil volume for $v_0\in[5.1,7.5]$ under controller (\ref{eqn:oil}), i.e. $\scriptstyle V_{\textit{aav}}(v_0)\,=\,\frac{1300 v_0^2 + 20420 v_0 + 634817}{114400}$\,, is illustrated by II of Fig.~\ref{fig:res}.
  \item From II of Fig.~\ref{fig:res} we can have an estimate of the performance of controller (\ref{eqn:oil}) in the long run. Without considering noises, it can be computed from (\ref{eqn:oil}) that $v(20)=6.3$ no matter what $v(0)$ is, implying that the mean value of $v_0$ equals 6.3. Therefore the mean average accumulated oil volume in the long run is $V_{\textit{aav}}(6.3)=\frac{40753}{5720}=7.125$. In \cite{kim09}, by simulating the oil pump system for a duration of 200$s$, the mean values $7.44$, $11.56$ and $13.45$ are obtained for the U{\small{PPAAL}}-T{\small{I}}G{\small{A}} controller, Smart Controller and Bang-Bang Controller respectively.
\end{itemize}
\begin{figure}
\begin{center}
  \includegraphics[width=1.7in,height=1.5in]{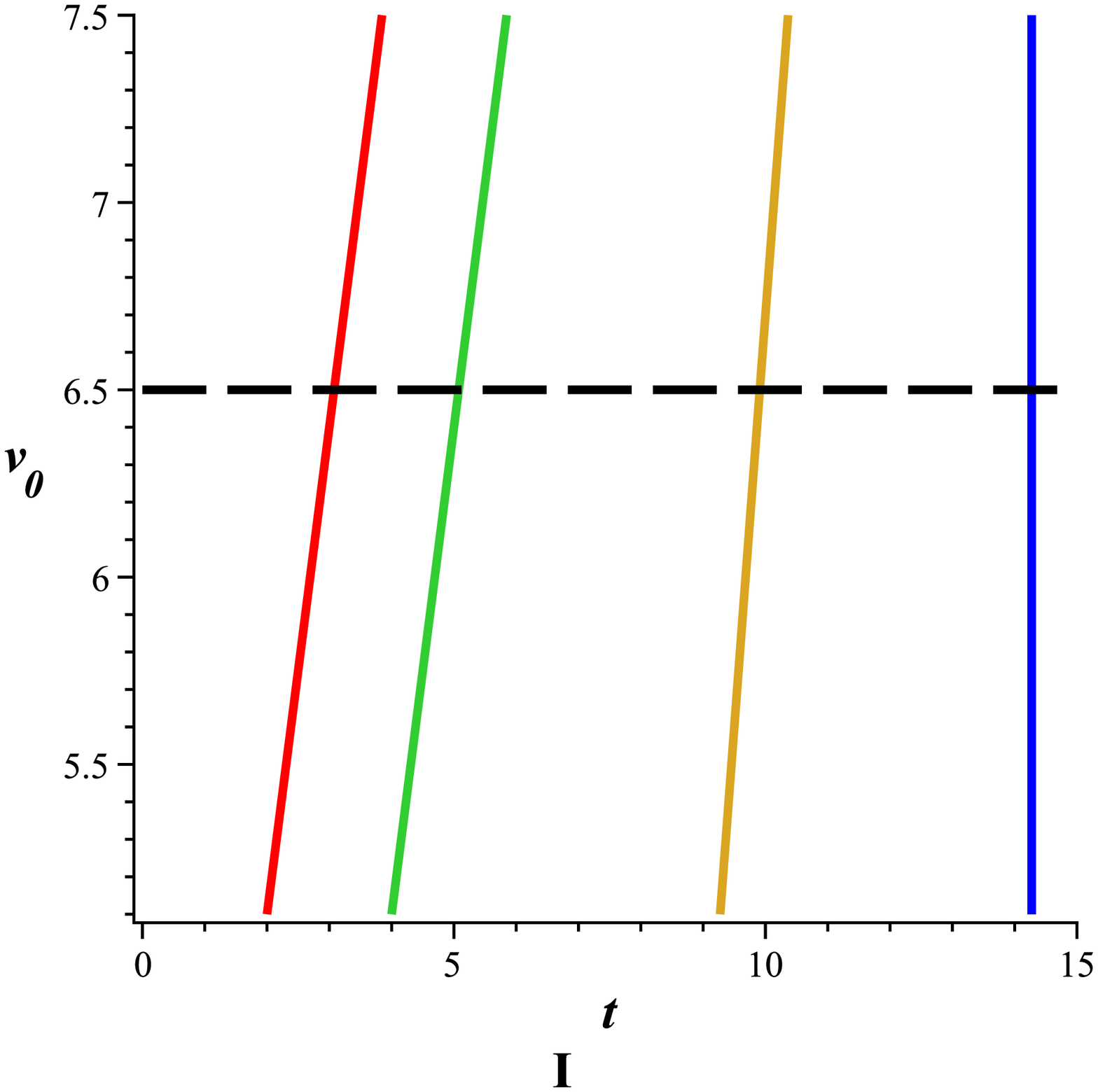}
  \hspace{1cm}
  \includegraphics[width=1.7in,height=1.5in]{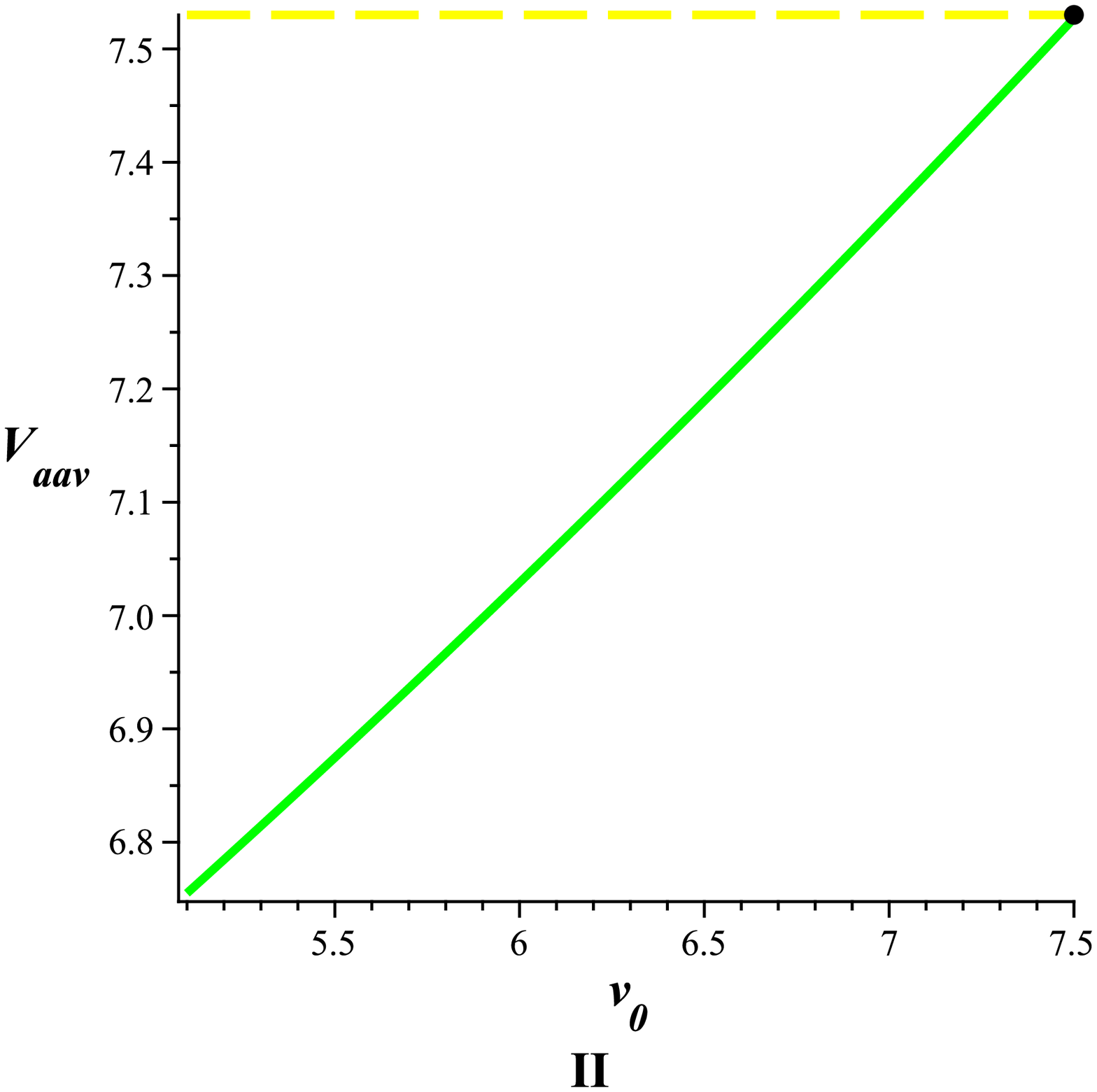}
  \caption{Optimal controllers and average accumulated oil volumes for 2 activations.}
  \label{fig:res}
\end{center}
\end{figure}

\section{Improvement by Increasing Activation Times}
In the controller shown by I of Fig.~\ref{fig:res}, we noticed that when $v_0$ is small and the pump is started on for the second time, it stays on for a period longer than 4 seconds. Based on this observation, we conjecture that if the pump is allowed to be activated three times in one cycle, then each time it could stay on for a shorter period, and the time it is activated for the third time can be postponed. As a result, the accumulated oil volume in one cycle may become less.

To verify the above conjecture, some modifications must be made on the previous model.
Firstly, $C_2$ and $C_3$ should be replaced
\begin{equation*}
 C_2'\,\define\,
  \begin{array}{l}
   \,\,\,\,{\scriptstyle (t_1\geq 2\,\wedge\, t_2-t_1\geq 2\, \wedge \, t_3-t_2\geq 2 \,\wedge \,t_4-t_3\geq 2\, \wedge\, t_5-t_4\geq 2\, \wedge \, t_6-t_5\geq 2\, \wedge\, t_6\leq 20)}\\
   \scriptstyle\vee\,  {\scriptstyle (t_1\geq 2\,\wedge\, t_2-t_1\geq 2 \,\wedge \, t_3-t_2\geq 2\,\wedge\, t_4-t_3\geq 2\,\wedge \,t_4\leq 20 \,\wedge\,  \,t_5= 20 \,\wedge\, t_6=20)}\\
  \scriptstyle \vee\,  {\scriptstyle (t_1\geq 2\,\wedge\, t_2-t_1\geq 2 \,\wedge\, t_2\leq 20\,\wedge\, t_3=20\,\wedge\, t_4=20\,\wedge\, t_5=20\,\wedge \, t_6=20) }\\
   \scriptstyle\vee\,{\scriptstyle (t_1=20\,\wedge\, t_2=20\, \wedge\, t_3=20\,\wedge \, t_4=20\,\wedge \,t_5=20\,\wedge \,t_6=20)}
  \end{array}
\end{equation*}
and
\begin{equation*}
 C_3'\,\define\,\,\,
  \begin{array}{lll}
\,\,\,\,  {\scriptstyle (0\leq t\leq t_1} & \longrightarrow & \scriptstyle{V_{in}=0)}\\
   \scriptstyle\wedge\,  \scriptstyle( t_1\leq t\leq t_2 & \longrightarrow & \scriptstyle V_{in}=2.2(t-t_1))\\
  \scriptstyle\wedge\,  \scriptstyle (t_2\leq t\leq t_3 & \longrightarrow & \scriptstyle V_{in}=2.2(t_2-t_1))\\
   \scriptstyle\wedge\,  \scriptstyle (t_3\leq t\leq t_4 & \longrightarrow &\scriptstyle
                        V_{in}=2.2(t_2-t_1)+2.2(t-t_3))\\
   \scriptstyle\wedge\,  \scriptstyle( t_4\leq t\leq t_5 & \longrightarrow &\scriptstyle
                  V_{in}=2.2(t_2+t_4-t_1-t_3))\\
  \scriptstyle \wedge\,  \scriptstyle( t_5\leq t\leq t_6 & \longrightarrow &\scriptstyle
                  V_{in}=2.2(t_2+t_4-t_1-t_3)+2.2(t-t_5))\\
  \scriptstyle \wedge\,  \scriptstyle( t_6\leq t\leq 20 & \longrightarrow &\scriptstyle
                  V_{in}=2.2(t_2+t_4+t_6-t_1-t_3-t_5))
  \end{array}
\end{equation*}
respectively;
secondly, in $C_5$ and $C_6$ the tolerance of noises should be increased to $0.3$, because due to the increase of times to operate the pump, the maximal uncertainty caused by imprecision in measurement of volume and time is now  $13.2\delta+\epsilon<0.3$; thirdly, the new objective function is
\begin{equation*}
\begin{array}{l}
g\,\define\,{\small{\frac{20v_0+1.1(t_1^2-t_2^2+t_3^2-t_4^2+t_5^2-t_6^2 -40t_1+40t_2-40t_3+40t_4-40t_5+40t_6)-132.2}{20}}}
\end{array}\enspace .
\end{equation*}


For this model, we get the following results.
\begin{itemize}
  \item Using interval $[5.2,8.1]$, the optimal average accumulated oil volume\footnote{The optimal value is first obtained at the 4th iteration, but there are many other intervals other than $[5.2,8.1]$ that give the same optimal value 7.35.} $\frac{6613}{900}=7.35$ is obtained, which is a 7.5\% improvement over the optimum 7.95 in \cite{kim09}.
  \item The local controllers for $v_0\in[5.2,8.1]$ is illustrated
  by I of Fig.~\ref{fig:res3}: 
  \begin{displaymath}  \left\{
  \begin{array}{ll}
  {\scriptstyle{t_1=\frac{10v_0-26}{13}}\,\wedge \, t_2=\frac{10v_0}{13}\,\wedge \,t_3=\frac{5v_0+76}{11}\, \wedge\, t_4=12\,\wedge\, t_5=14 \,\wedge \,t_6=\frac{359}{22}} & {\scriptstyle v_0\in[5.2,6.8)} \\
  {\scriptstyle{t_1=\frac{10v_0-26}{13}}\,\wedge\, t_2=\frac{10v_0}{13}\,\wedge\, t_3=\frac{5v_0+76}{11} \,\wedge\, t_4=\frac{5v_0+98}{11}\,\wedge \, t_5=\frac{5v_0+92}{9}\,\wedge\,  t_6=\frac{20v_0+3095}{198}}& {\scriptstyle v_0\in[6.8,7.5)}\\
  {\scriptstyle{t_1=\frac{10v_0-26}{13}}\,\wedge \, t_2=\frac{10v_0}{13}\,\wedge\, t_3=\frac{5v_0+76}{11} \,\wedge \, t_4=\frac{5v_0+98}{11}\,\wedge\,  t_5=\frac{5v_0+92}{9}\,\wedge  \, t_6=\frac{5v_0+110}{9}}& {\scriptstyle v_0\in[7.5,7.8)}\\
  {\scriptstyle{t_1=\frac{10v_0+26}{13}}\,\wedge \, t_2=\frac{45v_0+1300}{143}\,\wedge\, t_3=14 \,\wedge\, t_4=\frac{359}{22} \,\wedge\, t_5=20\, \wedge\, t_6=20}& {\scriptstyle{v_0\in[7.8,8.1]}\,.}
  \end{array}\right.
  \end{displaymath}
  \item The local optimal value for $v_0\in[5.2,8.1]$ is illustrated by II of Fig.~\ref{fig:res3}, from which it can be estimated that the mean average accumulated oil volume in the long run is around $6.8$.
\end{itemize}
\begin{figure}
\begin{center}
  \includegraphics[width=1.7in,height=1.6in]{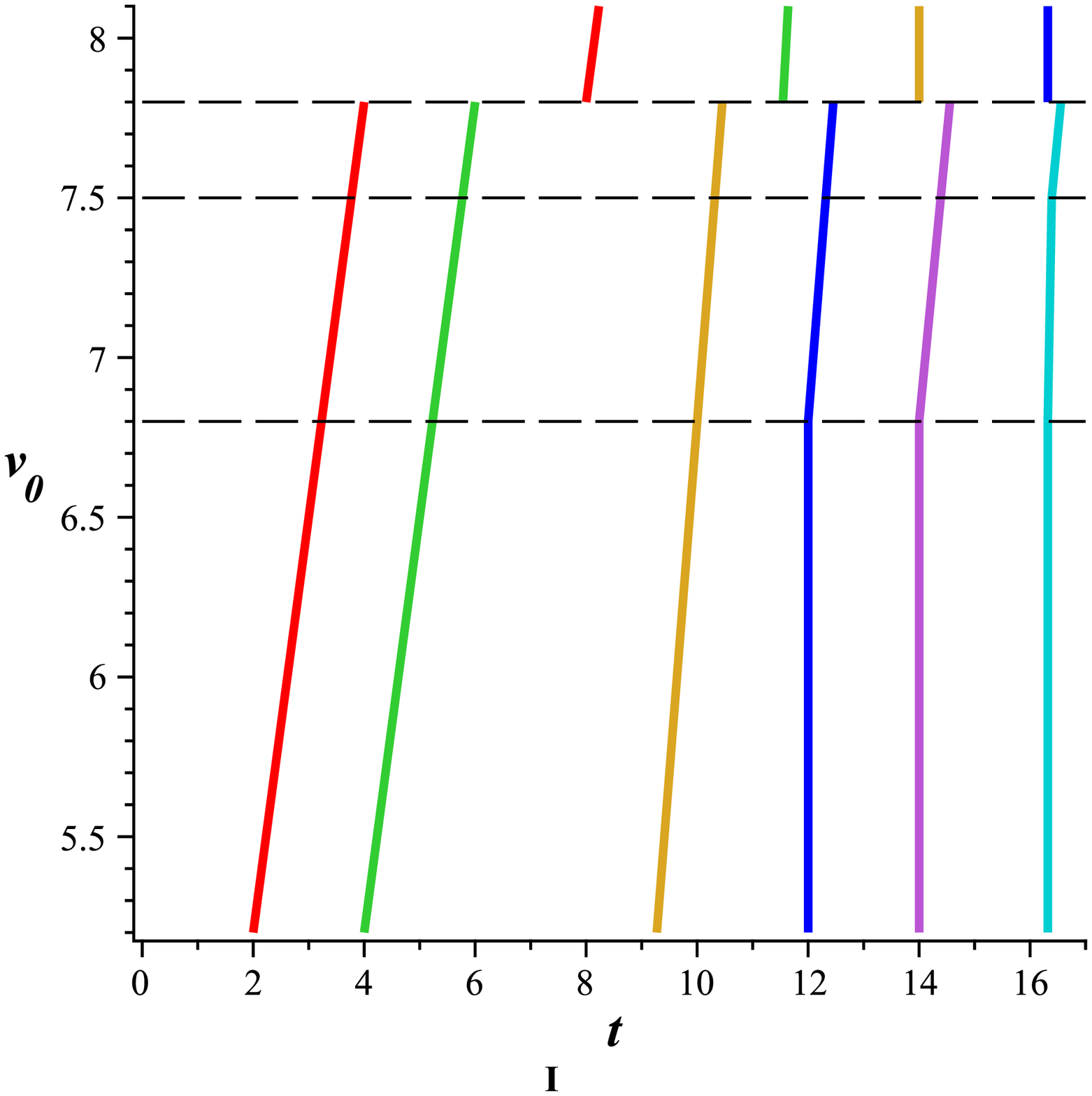}
  \hspace{1cm}
  \includegraphics[width=1.7in,height=1.6in]{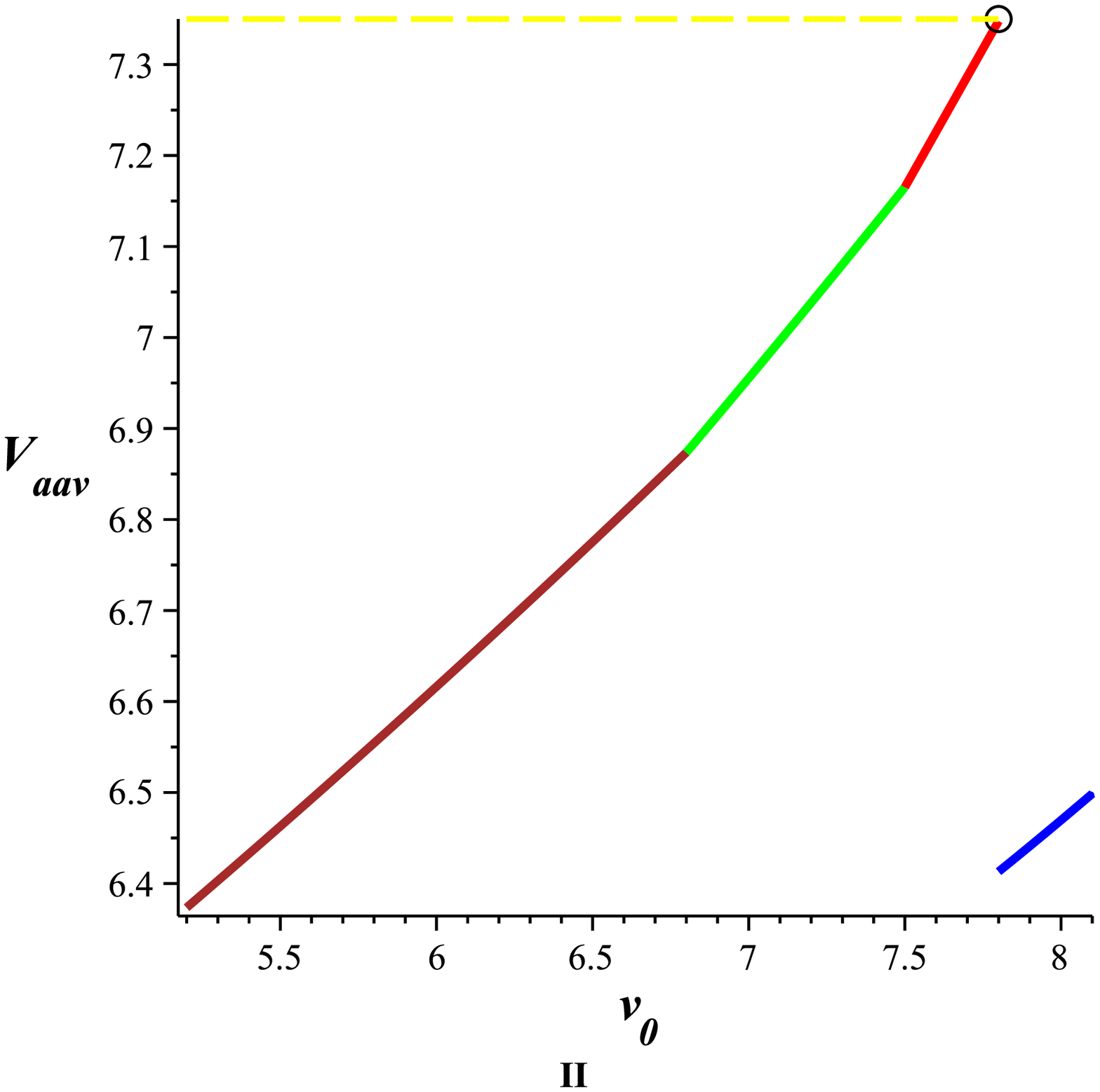}
  \caption{Optimal controllers and average accumulated oil volumes for 3 activations.}
  \label{fig:res3}
\end{center}
\end{figure}

Furthermore, the following theorem indicates that the theoretically optimal controller can be obtained using the local control strategy with three activations. Therefore, our approach in fact gives
the theoretically optimal controller in the oil pump industrial example.
\begin{theorem} \label{th-theoretical-optimal}
For each admissible $[L,U]$, each $v_0\in [L,U]$, and any local control strategy $s_4$ with at least 4 activations subject to $R_{lu}$, $R_i$ and $R_{ls}$, there exists a local control strategy $s_3$ subject to $R_{lu}$, $R_i$ and $R_{ls}$ with 3 activations such that $\,\frac{1}{20}\int_{t=0}^{20} v_{s_3}(t)\mathrm{d}t < \,\frac{1}{20}\int_{t=0}^{20} v_{s_4}(t)\mathrm{d}t$, where $v_{s_3}(t)$ (resp. $v_{s_4}(t)$) is the oil volume in the accumulator at $t$ with $s_3$ (resp. $s_4$).
\end{theorem}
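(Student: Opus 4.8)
The plan is to reduce the statement to a fact about the pump schedule alone. Writing $v(t)=v_0+V_{in}(t)-V_{out}(t)$ as in $C_4$ and observing that $v_0$ and the consumption profile $V_{out}$ do not depend on the control, minimizing $\frac{1}{20}\int_{0}^{20} v(t)\,\mathrm{d}t$ is equivalent to minimizing $\int_{0}^{20}V_{in}(t)\,\mathrm{d}t$. I would first record the cost of a single ``on'' interval: while the pump runs at rate $2.2$ on $[a,b]$, this interval contributes exactly $2.2(b-a)\bigl(20-\tfrac{a+b}{2}\bigr)$ to $\int_{0}^{20}V_{in}\,\mathrm{d}t$ (this is the origin of the quadratic form in $g$). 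From this I extract two monotonicity principles that drive the whole argument: for a fixed total pumped volume the cost strictly decreases as on-intervals are shifted later, and for a fixed placement the cost strictly increases with the total volume pumped.

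Second, I would quantify feasibility. The $2\,s$ latency built into $R_{lu}$ forces every on-interval to last at least $2\,s$, so a schedule with $k$ activations pumps at least $2.2\cdot 2k$ liters; in particular any $s_4$ pumps at least $17.6$ liters. On the other side, from the profile $C_1$ I would extract (i) the total oil that must be supplied over a cycle for $R_i$ to hold, and (ii) the ``critical valleys'' — the latest instants by which prescribed amounts of oil must already be present to keep $v(t)$ above the lower bound of $R_{ls}$ through the consumption bursts on $[8,12]$ and on $[14,18]$. The decisive numerical fact is that the $17.6$ liters forced by four activations strictly exceeds the volume a well-placed three-activation schedule needs in order to clear both valleys and still land $v(20)$ inside $[L,U]$.

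Third, given an arbitrary feasible $s_4$ (with four or more activations), I would construct $s_3$ explicitly by discarding or merging one activation and reallocating the supply to the latest admissible on-intervals: pump just enough, as late as the valley constraints of (ii) permit, to cover the two critical drops and to return $v(20)$ into $[L,U]$. Feasibility of $s_3$ (safety throughout, inductiveness, and latency) is then checked against $C_1$ valley by valley, and the strict improvement follows from the two monotonicity principles, since $s_4$ necessarily pumps strictly more oil (the $17.6$-versus-needed gap) and/or pumps part of it strictly earlier than $s_3$.

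I expect the feasibility verification of the constructed $s_3$ to be the main obstacle: one must show that three on-intervals, placed as late as possible, genuinely keep $v(t)$ above the lower safety bound across \emph{every} burst of $C_1$ while not overshooting $U$ at $t=20$, an interval-by-interval analysis relying on the specific constants in the profile. Guaranteeing that the inequality is \emph{strict} in all cases — including a degenerate $s_4$ whose on-intervals already sit at the $2\,s$ minimum — will require a separate argument that the excess oil forced by the fourth activation cannot be hidden, i.e. that it must raise either $\int_{0}^{20}V_{in}\,\mathrm{d}t$ directly or the terminal level beyond what $R_i$ allows.
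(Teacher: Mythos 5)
Your proposal is essentially the paper's own argument: the paper likewise canonicalizes an arbitrary feasible $s_4$ by postponing activations to the latest admissible windows (your first monotonicity principle, stated there as ``a better local control strategy can be constructed just by postponing the activation time''), uses the $2\,s$ latency to pin the last two activations of an optimal $s_4$ at $t=14$ and $t=18$ for $2\,s$ each (the concrete form of your $17.6$-liter count), and then merges these two late activations into a single longer one at $t=14$ of duration $3.2+\epsilon$ that pumps strictly less oil, checking feasibility against the consumption profile $C_1$. The interval-by-interval feasibility verification you flag as the main obstacle is exactly the step the paper also compresses into ``by a simple calculation,'' so your plan matches both the route and the level of rigor of the published proof.
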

\begin{proof}
From the consumption rate of the machine in Fig. \ref{fig:consump} and the behavior of the pump, we only need to consider a controller $s_4$ that turns on the pump 4 times in a circle in order to guarantee $R_{lu}$, $R_i$ and $R_{ls}$. Furthermore, by Fig. \ref{fig:consump}, it is easy to argue that turning on the pump can only take place in the intervals $[2,4]$, $[8,12]$ and $[14,20]$ in order to obtain an optimal local control strategy; otherwise, a better local control strategy can be constructed just by postponing the activation time accordingly. In addition, we can further show that the pump can only be turned on at most once in the interval $[8,12]$ for any optimal local control strategy. Now suppose we have an optimal local control strategy $s_4$ that needs to turn on the pump four times in a circle in order to guarantee  $R_{lu}$, $R_i$ and $R_{ls}$. Then by the above analysis, $s_4$ switches the pump on respectively in $[2,4]$, $[8,12]$, at 14 for 2 seconds and at 18 for another 2 seconds. If not, it is easy to show the strategy is not optimal by contradiction. Now, let us construct a local control strategy $s_3$ that turns on the pump three times in a circle as follows: its first two activation time are the same as the counterparts of $s_4$'s, but last $\epsilon$ seconds longer by considering noise, and it turns on the pump the third time at 14 for $3.2 + \epsilon$ seconds, where $\epsilon$ is the noise (0.1 in this paper). By a simple calculation, it is easy to see that $s_3$ satisfies $R_{lu}$, $R_i$ and $R_{ls}$, and $\,\frac{1}{20}\int_{t=0}^{20} v_{s_3}(t)\mathrm{d}t < \,\frac{1}{20}\int_{t=0}^{20} v_{s_4}(t)\mathrm{d}t$. \qed
\end{proof}

\oomit{\section{Related Work}
An adapted version of the oil pump controller case was studied in \cite{Tiwari-emsoft11} using unconstrained nonlinear numerical optimization and learning techniques.

Application of QE in controller synthesis of hybrid systems is not new.
The tool H{\small{Y}}T{\small{ECH}} was the first symbolic model checker that can do parametric analysis \cite{Hytech} for linear hybrid automata, but for the oil pump example it will abort soon due to arithmetic overflow errors. Recently, verification and synthesis of switched dynamical systems using QE were discussed in \cite{sturm-tiwari-issac11}, where the authors gave principles and heuristics for combining different tools, to solve QE problems that are out of the scope of each component tool. }

\section{Conclusions}
In this paper, we propose a ``hybrid" approach for synthesizing optimal controllers of hybrid systems subject to safety requirements by first reducing the problem to QE and then combining symbolic computation and numerical computation for scalability. We illustrate our approach by a real industrial case of an oil pump provided by the H{\small{YDAC}} company.

Compared to the related work, e.g. \cite{kim09}, our approach has the following advantages.
\begin{enumerate}
  \item By modeling the system, safety requirements as well as optimality objectives uniformly and succinctly using first-order real arithmetic formulas, synthesis, verification and optimization are integrated into one elegant framework. The synthesized controllers are guaranteed to be correct.
  \item By combining symbolic computation with numerical computation, we can obtain both high precision and efficiency. For the oil pump example, our approach can synthesize a better (up to $7.5\%$ improvement of \cite{kim09}) optimal controller in a reasonable amount of time (see Table \ref{tbl:timing}), even nearly a theoretically optimal controller by Theorem \ref{th-theoretical-optimal}.
\end{enumerate}

The issues of evaluation and implementation of our controllers are being considered. To make our approach more general with symbolic and numerical components, and apply it to more examples in practice will be our future work.

\subsubsection{Acknowledgements.} Special thanks go to Mr. Quan Zhao for his kind help in writing an interface between different QE tools, and to Dr. David Monniaux for his instructions on the use of the tool Mjollnir.


\newpage
\appendix
\section{Display of Formulas by QE}
\subsection{The First 10 Disjuncts of $\mathcal D$}\label{app:dnf92}
\begin{displaymath}
\begin{array}{ll}
  & 26 t_1 - 10 v_0 - 157 > 0 \,\wedge\, 22 t_1 - 22 t_2 + 22 t_3 - 22 t_4 - 10 v_0 + 275 < 0 \,\wedge\,\\
  & 11 t_1 - 11 t_2 + 11 t_3 - 11 t_4 - 5 v_0 + 5 U + 65 \geq 0 \,\wedge\,  11 t_1 - 11 t_2 + 11 t_3 - 11 t_4 - 5 v_0 + 5 L + 77 \leq 0 \,\wedge\,\\
  & t_1 - t_2 + 2 \leq 0 \,\wedge\, t_2 - t_3 + 2 \leq 0 \,\wedge\, t_3 - t_4 + 2 \leq 0 \,\wedge\, t_4 - 20 \leq 0 \,\wedge\, 2 v_0 - 31 \geq 0 \,\wedge\,v_0 - U \leq 0 \,\wedge\,\\
  & 10 L - 51 \geq 0\,\wedge \,10 U - 249 \leq 0\\
  \vee
  & 26 t_1 - 10 v_0 - 157 > 0 \,\wedge\, 22 t_1 - 22 t_2 + 22 t_3 - 22 t_4 - 10 v_0 + 275 \geq 0 \,\wedge\, \\
  & 22 t_1 - 22 t_2 + 6 t_3 - 10 v_0 + 95 \leq 0 \,\wedge\, 11 t_1 - 11 t_2 + 11 t_3 - 11 t_4 - 5 v_0 + 5 L + 77 \leq 0 \,\wedge\, \\
  & t_1 - t_2 + 2 \leq 0 \,\wedge\, t_2 - t_3 + 2 \leq 0 \,\wedge\, t_3 - t_4 + 2 \leq 0 \,\wedge\, t_4 - 20 \leq 0 \,\wedge\, 2 v_0 - 31 \geq 0 \,\wedge\, \\
  & v_0 - U \leq 0 \,\wedge\, 10 U - 249 \leq 0 \,\wedge\, 10 L - 51 \geq 0\\
  \vee
  & 26 t_1 - 10 v_0 - 157 > 0 \,\wedge\, 22 t_1 - 22 t_2 + 22 t_3 - 22 t_4 - 10 v_0 + 271 < 0 \,\wedge\, \\
  & 22 t_1 - 22 t_2 + 18 t_3 - 10 v_0 - 97 > 0 \,\wedge\, 11 t_1 - 11 t_2 + 11 t_3 - 11 t_4 - 5 v_0 + 5 L + 77 \leq 0 \,\wedge\, \\
  & t_1 - t_2 + 2 \leq 0 \,\wedge\, t_2 - t_3 + 2 \leq 0 \,\wedge\, t_4 - 20 \leq 0 \,\wedge\, 2 v_0 - 31 \geq 0 \,\wedge\, v_0 - U \leq 0 \,\wedge\,\\
  & 10 U - 249 \leq 0 \,\wedge\, 10 L - 51 \geq 0\\
  \vee
  & 22 t_1 - 11 t_2 - 10 v_0 + 183 \geq 0 \,\wedge\, 22 t_1 - 22 t_2 + 22 t_3 - 11 t_4 - 10 v_0 + 183 \geq 0 \,\wedge\, \\
  & 22 t_1 - 22 t_2 + 22 t_3 - 22 t_4 - 10 v_0 + 341 < 0 \,\wedge\, 13 t_1 - 10 v_0 + 25 \leq 0 \,\wedge\, \\
  & 11 t_1 - 11 t_2 + 11 t_3 - 11 t_4 - 5 v_0 + 5 U + 65 \geq 0 \,\wedge\, 11 t_1 - 11 t_2 + 11 t_3 - 11 t_4 - 5 v_0 + 5 L + 77 \leq 0 \,\wedge\, \\
  & t_1 - t_2 + 2 \leq 0 \,\wedge\, t_1 - 2 \geq 0 \,\wedge\, t_2 - t_3 + 2 \leq 0 \,\wedge\, t_3 - t_4 + 2 \leq 0 \,\wedge\, t_4 - 20 \leq 0 \,\wedge\, \\
  & v_0 - U \leq 0 \,\wedge\, v_0 - L \geq 0 \,\wedge\, 10 U - 249 \leq 0 \,\wedge\, 10 L - 51 \geq 0\\
  \vee
  & 22 t_1 - 6 t_2 - 10 v_0 + 117 < 0 \,\wedge\, 22 t_1 - 11 t_2 - 10 v_0 + 183 \geq 0 \,\wedge\, \\
  & 22 t_1 - 22 t_2 + 22 t_3 + 4 t_4 - 10 v_0 - 157 > 0 \,\wedge\, 22 t_1 - 22 t_2 + 22 t_3 - 22 t_4 - 10 v_0 + 341 < 0 \,\wedge\, \\
  & 11 t_1 - 11 t_2 + 11 t_3 - 11 t_4 - 5 v_0 + 5 U + 65 \geq 0 \,\wedge\, 11 t_1 - 11 t_2 + 11 t_3 - 11 t_4 - 5 v_0 + 5 L + 77 \leq 0 \,\wedge\, \\
  & t_1 - t_2 + 2 \leq 0 \,\wedge\, t_1 - 2 \geq 0 \,\wedge\, t_3 - t_4 + 2 \leq 0 \,\wedge\, t_4 - 20 \leq 0 \,\wedge\, v_0 - U \leq 0 \,\wedge\, v_0 - L \geq 0 \,\wedge\, \\
  & 10 U - 249 \leq 0 \,\wedge\, 10 L - 51 \geq 0\\
  \vee
  & 22 t_1 - 11 t_2 - 10 v_0 + 183 \geq 0 \,\wedge\, 22 t_1 - 22 t_2 + 22 t_3 - 11 t_4 - 10 v_0 + 183 \geq 0 \,\wedge\, \\
  & 22 t_1 - 22 t_2 + 22 t_3 - 22 t_4 - 10 v_0 + 341 < 0 \,\wedge\, 22 t_1 - 10 v_0 + 73 < 0 \,\wedge\, \\
  & 11 t_1 - 11 t_2 + 11 t_3 - 11 t_4 - 5 v_0 + 5 U + 65 \geq 0 \,\wedge\,  11 t_1 - 11 t_2 + 11 t_3 - 11 t_4 - 5 v_0 + 5 L + 77 \leq 0 \,\wedge\, \\
  & t_1 - t_2 + 2 \leq 0 \,\wedge\, t_1 - 2 \geq 0 \,\wedge\, t_4 - 20 \leq 0 \,\wedge\, v_0 - U \leq 0 \,\wedge \, v_0 - L \geq 0 \,\wedge\, \\
  & 10 U - 249 \leq 0 \,\wedge\, 10 L - 51 \geq 0\\
  \vee
  & 22 t_1 - 11 t_2 - 10 v_0 + 183 \geq 0 \,\wedge\, 22 t_1 - 22 t_2 + 22 t_3 - 6 t_4 - 10 v_0 + 117 \geq 0 \,\wedge\, \\
  & 22t_1 - 22 t_2 + 22 t_3 - 11 t_4 - 10 v_0 + 183 < 0 \,\wedge\, 22 t_1 - 22 t_2 + 22 t_3 - 18 t_4 - 10 v_0 + 309 \geq 0 \,\wedge\, \\
  & 11 t_1 - 11 t_2 + 11 t_3 - 11 t_4 - 5 v_0 + 5 U + 65 \geq 0 \,\wedge\, 11 t_1 - 11 t_2 + 11 t_3 - 11 t_4 - 5 v_0 + 5 L + 77 \leq 0 \,\wedge\, \\
  & t_1 - t_2 + 2 \leq 0 \,\wedge\, t_1 - 2 \geq 0 \,\wedge\, t_2 - t_3 + 2 \leq 0 \, \wedge\, v_0 - U \leq 0 \,\wedge\, v_0 - L \geq 0 \,\wedge\, \\
  & 10 U - 249 \leq 0 \,\wedge\, 10 L - 51 \geq 0\\
  \vee
  & 22 t_1 - 11 t_2 - 10 v_0 + 183 \geq 0 \,\wedge\, 22 t_1 - 22 t_2 + 22 t_3 - 22 t_4 - 10 v_0 + 341 \geq 0 \,\wedge\, \\
  & 22 t_1 - 22 t_2 - 10 v_0 + 271 < 0 \,\wedge\, 11 t_1 - 11 t_2 + 11 t_3 - 11 t_4 - 5 v_0 + 5 U + 65 \geq 0 \,\wedge\, \\
  & 11 t_1 - 11 t_2 + 11 t_3 - 11 t_4 - 5 v_0 + 5 L + 77 \leq 0 \,\wedge\, t_1 - t_2 + 2 \leq 0 \,\wedge\,t_2 - t_3 + 2 \leq 0 \,\wedge\, \\
  & t_3 - t_4 + 2 \leq 0 \,\wedge\, t_4 - 20 \leq 0 \,\wedge\, 2 v_0 - 31 \geq 0 \,\wedge\, v_0 - U \leq 0 \,\wedge\, v_0 - L \geq 0 \,\wedge\, \\
  & 10 U - 249 \leq 0 \,\wedge\, 10 L - 51 \geq 0\\
  \vee
  & 22 t_1 - 11 t_2 - 10 v_0 + 183 \geq 0 \,\wedge\, 22 t_1 - 22 t_2 + 22 t_3 - 6 t_4 - 10 v_0 + 117 \geq 0 \,\wedge\, \\
  & 22 t_1 - 22 t_2 + 22 t_3 - 11 t_4 - 10 v_0 + 183 < 0 \,\wedge\, 22 t_1 - 22 t_2 + 22 t_3 - 18 t_4 - 10 v_0 + 309 \geq 0 \,\wedge\, \\
  & 22 t_1 - 22 t_2 - 10 v_0 + 271 < 0 \,\wedge\, 11 t_1 - 11 t_2 + 11 t_3 - 11 t_4 - 5 v_0 + 5 U + 65 \geq 0 \,\wedge\, \\
  & t_1 - 2 \geq 0 \,\wedge\, t_2 - t_3 + 2 \leq 0 \,\wedge\, 2 v_0 - 31 < 0 \,\wedge\, v_0 - L \geq 0 \,\wedge\, 10 U - 249 \leq 0 \,\wedge\, 10 L - 51 \geq0
  \end{array}
\end{displaymath}
\begin{displaymath}
\begin{array}{ll}
  \vee
  & 22 t_1 - 22 t_2 + 18 t_3 - 10 v_0 - 97 > 0 \,\wedge\, 22 t_1 - 22 t_2 + 6 t_3 - 10 v_0 + 95 \leq 0 \,\wedge\, \\
  & 22 t_1 - 10 v_0 - 109 \leq 0 \,\wedge\, 13 t_1 - 10 v_0 - 27 \leq 0 \,\wedge \, 11 t_1 - 11 t_2 + 11 t_3 - 11 t_4 - 5 v_0 + 5 U + 65 \geq 0 \,\wedge\,\\
  & 11 t_1 - 11 t_2 + 11 t_3 - 11 t_4 - 5 v_0 + 5 L + 77 \leq 0 \,\wedge\, t_1 - t_2 + 2 \leq 0 \,\wedge\, t_1 - 2 \geq 0 \,\wedge\, \\
  & t_3 - t_4 + 2 \leq 0 \,\wedge\, t_4 - 20 \leq 0 \,\wedge\, 10 v_0 - 77 \geq 0 \, \wedge\, v_0 - U \leq 0 \,\wedge\, v_0 - L \geq 0 \,\wedge\, \\
  & 10 U - 249 \leq 0 \,\wedge\, 10 L - 51 \geq 0
\end{array}
\end{displaymath}

\subsection{The First 10 Disjuncts of $\mathcal D'$}\label{app:dnf580}
\begin{displaymath}
\begin{array}{ll}
  & t_1 - 14 = 0 \,\wedge\, t_2 - 16 = 0 \,\wedge\, t_3 - 18 = 0 \,\wedge\, t_4 - 20 = 0 \,\wedge\, \\
  & 10 v_0 - 207 = 0 \,\wedge\, 10 U - 207 \geq 0 \,\wedge\, 10 U - 249 \leq 0 \,\wedge\,10 L - 141 = 0\\
  \vee
  & t_1 - 14 = 0 \,\wedge\, t_2 - 16 = 0 \,\wedge\, t_3 - 18 = 0 \,\wedge\, t_4 - 20 = 0 \,\wedge\, \\
  & 10 v_0 - 187 = 0 \,\wedge\, 10 U - 187 \geq 0 \,\wedge\, 10 U - 249 \leq 0 \,\wedge\,10 L - 121 = 0 \\
  \vee
  & t_1 - 14 = 0 \,\wedge\, t_2 - 16 = 0 \,\wedge\, t_3 - 18 = 0 \,\wedge\, t_4 - 20 = 0 \,\wedge\, \\
  & 10 v_0 - 187 > 0 \,\wedge\, 10 v_0 - 207 < 0 \,\wedge\, 5 v_0 - 5 L - 33 = 0 \,\wedge\, v_0 - U \leq 0 \,\wedge\, 10 U - 249 \leq 0 \\
  \vee
  & 22 t_1 - 10 v_0 - 121 = 0 \,\wedge\, t_1 - t_2 + 2 = 0 \,\wedge\,t_1 - t_3 + 4 = 0 \,\wedge\, t_4 - 20 = 0 \,\wedge\, \\
  & 11 t_1 - 138 \geq 0 \,\wedge\, t_1 - 14 < 0 \,\wedge \, 22 t_1 - 10 U - 121 \leq 0 \,\wedge\, 10 U - 249 \leq 0 \,\wedge\, 10 L - 121 = 0 \\
  \vee
  & 11 t_1 - 5 v_0 + 5 L - 121 = 0 \,\wedge\, t_1 - t_2 + 2 = 0 \,\wedge\, t_1 - t_3 + 4 = 0\,\wedge\, t_4 - 20 = 0 \,\wedge\, \\
  & t_1 - 14 < 0 \,\wedge\, 2 v_0 - 31 \geq 0 \,\wedge\, v_0 - U \leq 0 \,\wedge\, 10 U - 249 \leq 0 \,\wedge\, 26 t_1 - 10 v_0 - 157 > 0 \,\wedge\, \\
  & 22 t_1 - 10 v_0 - 121 < 0 \,\wedge\,11 t_1 - 5 v_0 + 5 U - 133 \geq 0 \\
  \vee
  & t_1 - 14 = 0 \,\wedge\, t_2 - 16 = 0 \,\wedge\, t_3 - 18 = 0 \,\wedge\, t_4 - 20 = 0 \,\wedge\, \\
  & 10 v_0 - 187 = 0 \,\wedge\, 10 U - 187 \geq 0 \,\wedge\, 10 U - 249 \leq 0 \,\wedge\,10 L - 51 \geq 0 \,\wedge\, 10 L - 121 < 0 \\
  \vee
  & 22 t_1 - 10 v_0 - 121 = 0 \,\wedge\, t_1 - t_2 + 2 = 0 \,\wedge\, t_1 - t_3 + 4 = 0 \,\wedge\, t_4 - 20 = 0 \,\wedge\, \\
  & 10 U - 249 \leq 0 \,\wedge\, 10 L - 51 \geq 0 \,\wedge\, 10 L - 121< 0 22 t_1 - 10 U - 121 \leq 0\,\wedge\, \\
  & \,\wedge\, 11 t_1 - 138 \geq 0 \,\wedge\, t_1 - 14 < 0\\
  \vee
  & 26 t_1 - 10 v_0 - 157 = 0\,\wedge\, 4 t_1 - 10 L + 85 = 0 \,\wedge\, t_1 - t_2 + 2 = 0 \,\wedge\, t_1 - t_3 + 4 = 0 \,\wedge\, t_4 - 20 = 0 \,\wedge\,\\
  & t_1 - 12 \geq 0 \,\wedge\, t_1 - 14 < 0 \,\wedge\, 10 U - 249 \leq 0\,\wedge\, 26 t_1 - 10 U - 157 \leq 0 \,\wedge\, 4 t_1 - 10 U + 109 \leq 0 \\
  \vee
  & t_1 - 14 = 0 \,\wedge\, t_2 - 16 = 0 \,\wedge\, t_3 - 18 = 0 \,\wedge\, t_4 - 20 = 0 \,\wedge\, \\
  & 10 v_0 - 207 = 0 \,\wedge\, 10 U - 207 \geq 0 \,\wedge\, 10 U - 249 \leq 0 \,\wedge\,10 L - 51 \geq 0 \,\wedge\, 10 L - 141 < 0
  \\
  \vee
  & 11 t_1 - 5 v_0 + 5 U - 133 = 0 \,\wedge\, 11 t_1 - 5 v_0 + 5 L - 121 = 0 \,\wedge\, t_1 - t_2 + 2 = 0 \,\wedge\, t_1 - t_3 + 4 = 0 \,\wedge\,
  \\ & t_4 - 20 = 0\,\wedge\, 2 v_0 - 31 \geq 0 \,\wedge\, 26 t_1 - 10 v_0 - 157 > 0\,\wedge\, 11 t_1 - 133 \leq 0
\end{array}
\end{displaymath}

\end{document}